    \definecolor{myblue}{RGB}{3,70,143}
    \definecolor{mypink}{RGB}{255,151,151}
    \definecolor{mypurple}{RGB}{164,0,102}
    \definecolor{myteal}{RGB}{0,157,158}
    \definecolor{darkred}{rgb}{0.5,0,0}
\definecolor{pgreen}{RGB}{84, 129, 102}
\definecolor{porange}{RGB}{199, 103, 42}
\newtheorem{theorem}{Theorem}
\newtheorem{result}{Result}
\theoremstyle{definition}
\newtheorem{definition}{Definition}
\theoremstyle{remark}
\newtheorem{remark}{Remark}
\renewcommand*\paragraph[1]{\emph{#1.}---}
\newcommand*\eye{{\normalfont\openone}}
\renewcommand*\tilde[1]{\widetilde{#1}}
\renewcommand{\leq}{\leqslant}
\renewcommand{\geq}{\geqslant}
\renewcommand{\set}[1]{\{ #1 \}}
\newcommand{\siegen}{Naturwissenschaftlich-Technische Fakult\"{a}t, Universit\"{a}t Siegen, Walter-Flex-Stra\ss e 3, 57068 Siegen, Germany}
\newcommand{\unicamp}{Instituto de Matem\'{a}tica, Estat\'{i}stica e Computa\c{c}\~{a}o Cient\'{i}fica, Universidade Estadual de Campinas, 13083-859, Campinas, S\~{a}o Paulo, Brazil}
\newcommand{\sorbonne}{Sorbonne Universit\'{e}, CNRS, LIP6, F-75005 Paris, France}
\newcommand{\perimeter}{Perimeter Institute for Theoretical Physics, Waterloo, Ontario, N2L 2Y5, Canada}
\newcommand{\waterloo}{Institute for Quantum Computing, University of Waterloo, Waterloo, Ontario, N2L 3G1, Canada}
\newcommand{\inria}{CPHT, LIX, CNRS, Inria, École polytechnique, Institut Polytechnique de Paris, Palaiseau, France}
\newcommand{\siegenfinancialsupport}{the Deutsche Forschungsgemeinschaft (DFG, German Research Foundation, project numbers 447948357 and 440958198), the Sino-German Center for Research Promotion (Project M-0294), the ERC (Consolidator Grant 683107/TempoQ), the German Ministry of Education and Research (Project QuKuK, BMBF Grant No.\ 16KIS1618K), the House of Young Talents of the University of Siegen, }
\definecolor{incolor}{HTML}{303F9F}
\definecolor{outcolor}{HTML}{D84315}
\definecolor{cellborder}{HTML}{CFCFCF}
\definecolor{cellbackground}{HTML}{F7F7F7}
\newcommand{\boxspacing}{\kern\kvtcb@left@rule\kern\kvtcb@boxsep}
\newcommand{\prompt}[4]{
    {\ttfamily\llap{{\color{#2}[#3]:\hspace{3pt}#4}}\vspace{-\baselineskip}}
}
\def\PY@reset{\let\PY@it=\relax \let\PY@bf=\relax%
    \let\PY@ul=\relax \let\PY@tc=\relax%
    \let\PY@bc=\relax \let\PY@ff=\relax}
\def\PY@tok#1{\csname PY@tok@#1\endcsname}
\def\PY@toks#1+{\ifx\relax#1\empty\else%
    \PY@tok{#1}\expandafter\PY@toks\fi}
\def\PY@do#1{\PY@bc{\PY@tc{\PY@ul{%
    \PY@it{\PY@bf{\PY@ff{#1}}}}}}}
\def\PY#1#2{\PY@reset\PY@toks#1+\relax+\PY@do{#2}}
    \newbox\Wrappedcontinuationbox
    \newbox\Wrappedvisiblespacebox
    \newcommand*\Wrappedvisiblespace {\textcolor{red}{\textvisiblespace}}
    \newcommand*\Wrappedcontinuationsymbol {\textcolor{red}{\llap{\tiny$\m@th\hookrightarrow$}}}
    \newcommand*\Wrappedcontinuationindent {3ex }
    \newcommand*\Wrappedafterbreak {\kern\Wrappedcontinuationindent\copy\Wrappedcontinuationbox}
    \newcommand*\Wrappedbreaksatspecials {%
        \def\PYGZus{\discretionary{\char`\_}{\Wrappedafterbreak}{\char`\_}}%
        \def\PYGZob{\discretionary{}{\Wrappedafterbreak\char`\{}{\char`\{}}%
        \def\PYGZcb{\discretionary{\char`\}}{\Wrappedafterbreak}{\char`\}}}%
        \def\PYGZca{\discretionary{\char`\^}{\Wrappedafterbreak}{\char`\^}}%
        \def\PYGZam{\discretionary{\char`\&}{\Wrappedafterbreak}{\char`\&}}%
        \def\PYGZlt{\discretionary{}{\Wrappedafterbreak\char`\<}{\char`\<}}%
        \def\PYGZgt{\discretionary{\char`\>}{\Wrappedafterbreak}{\char`\>}}%
        \def\PYGZsh{\discretionary{}{\Wrappedafterbreak\char`\#}{\char`\#}}%
        \def\PYGZpc{\discretionary{}{\Wrappedafterbreak\char`\%}{\char`\%}}%
        \def\PYGZdl{\discretionary{}{\Wrappedafterbreak\char`\$}{\char`\$}}%
        \def\PYGZhy{\discretionary{\char`\-}{\Wrappedafterbreak}{\char`\-}}%
        \def\PYGZsq{\discretionary{}{\Wrappedafterbreak\textquotesingle}{\textquotesingle}}%
        \def\PYGZdq{\discretionary{}{\Wrappedafterbreak\char`\"}{\char`\"}}%
        \def\PYGZti{\discretionary{\char`\~}{\Wrappedafterbreak}{\char`\~}}%
    }
    \newcommand*\Wrappedbreaksatpunct {%
        \lccode`\~`\.\lowercase{\def~}{\discretionary{\hbox{\char`\.}}{\Wrappedafterbreak}{\hbox{\char`\.}}}%
        \lccode`\~`\,\lowercase{\def~}{\discretionary{\hbox{\char`\,}}{\Wrappedafterbreak}{\hbox{\char`\,}}}%
        \lccode`\~`\;\lowercase{\def~}{\discretionary{\hbox{\char`\;}}{\Wrappedafterbreak}{\hbox{\char`\;}}}%
        \lccode`\~`\:\lowercase{\def~}{\discretionary{\hbox{\char`\:}}{\Wrappedafterbreak}{\hbox{\char`\:}}}%
        \lccode`\~`\?\lowercase{\def~}{\discretionary{\hbox{\char`\?}}{\Wrappedafterbreak}{\hbox{\char`\?}}}%
        \lccode`\~`\!\lowercase{\def~}{\discretionary{\hbox{\char`\!}}{\Wrappedafterbreak}{\hbox{\char`\!}}}%
        \lccode`\~`\/\lowercase{\def~}{\discretionary{\hbox{\char`\/}}{\Wrappedafterbreak}{\hbox{\char`\/}}}%
        \catcode`\.\active
        \catcode`\,\active
        \catcode`\;\active
        \catcode`\:\active
        \catcode`\?\active
        \catcode`\!\active
        \catcode`\/\active
        \lccode`\~`\~
    }
\let\OriginalVerbatim=\Verbatim
\renewcommand{\Verbatim}[1][1]{%
    \sbox\Wrappedcontinuationbox {\Wrappedcontinuationsymbol}%
    \sbox\Wrappedvisiblespacebox {\FV@SetupFont\Wrappedvisiblespace}%
    \def\FancyVerbFormatLine ##1{\hsize\linewidth
        \vtop{\raggedright\hyphenpenalty\z@\exhyphenpenalty\z@
            \doublehyphendemerits\z@\finalhyphendemerits\z@
            \strut ##1\strut}%
    }%
    \def\FV@Space {%
        \nobreak\hskip\z@ plus\fontdimen3\font minus\fontdimen4\font
        \discretionary{\copy\Wrappedvisiblespacebox}{\Wrappedafterbreak}
        {\kern\fontdimen2\font}%
    }%

    \Wrappedbreaksatspecials
    \OriginalVerbatim[#1,codes*=\Wrappedbreaksatpunct]%
}
\newtheorem{open}{Open Question}
\renewcommand*{\@fnsymbol}[1]{%
  \ifcase#1%
    \or \ensuremath{\dagger}
    \or \ensuremath{\ddagger}
    \or \ensuremath{\S}
    \or \ensuremath{\P}
    \or \ensuremath{\|}
    \else \@ctrerr%
  \fi
}
\begin{document}
\title{Can outcome communication explain Bell nonlocality?}

\begingroup
\renewcommand\thefootnote{*}
\footnotetext{These authors contributed equally to this work.}
\endgroup

\author{Carlos Vieira\textsuperscript{*}}%
\email{carlosvieira@ime.unicamp.br}
\affiliation{\unicamp}

\author{Carlos de Gois\textsuperscript{*}}%
\email{carlos.belini-de-gois@inria.fr}
\affiliation{\siegen}
\affiliation{\inria}

\author{Pedro Lauand}
\affiliation{\perimeter}
\affiliation{\waterloo}

\author{Lucas~E.~A.~Porto}
\affiliation{\sorbonne}

\author{Sébastien Designolle}
\affiliation{Zuse Institute Berlin, 14195 Berlin, Germany}
\affiliation{Inria, ENS de Lyon, UCBL, LIP, 69342, Lyon Cedex 07, France}

\author{Marco T\'{u}lio Quintino}
\email{marco.quintino@lip6.fr}
\affiliation{\sorbonne}

\begin{abstract}
    A central aspect of quantum information is that correlations between spacelike separated observers sharing entangled states cannot be reproduced by local hidden variable (LHV) models, a phenomenon known as Bell nonlocality.
    If one wishes to explain such correlations by classical means, a natural possibility is to allow communication between the parties.
    In particular, LHV models augmented with two bits of classical communication can explain the correlations of any two-qubit state.
    Would this still hold if communication is restricted to measurement outcomes?
    While in certain scenarios with a finite number of inputs the answer is yes, we prove that if a model must reproduce all projective measurements, then for any qubit–qudit state the answer is no. In fact, a qubit-qudit under projective measurements admits an LHV model with outcome communication if and only if it already admits an LHV model without communication. On the other hand, we also show that when restricted sets of measurements are considered (for instance, when the qubit measurements are in the upper hemisphere of the Bloch ball), outcome communication does offer an advantage. This exemplifies that trivial properties in standard LHV scenarios, such as deterministic measurements and outcome-relabelling, play a crucial role in the outcome communication scenario.
    \end{abstract}

\maketitle

\section{Introduction}
In a seminal work, John Bell proved that the correlations between spatially separated experiments predicted by quantum theory cannot be reproduced by classical models satisfying a natural notion of locality~\cite{bell_einstein_1964}. 
This phenomenon, known as Bell nonlocality, constitutes one of the basis of our current understanding of nature, and it is a central aspect of quantum information theory \cite{GENOVESE2005319,Brunner:2014RMP, Hensen:2015NAT, Giustina:2015PRL, Shalm:2015PRL}.

The class of classical models originally considered by Bell, termed local hidden variable (LHV) models, represents the behaviour of classical systems that have interacted in the past, but are then physically separated and unable to communicate [\cref{fig:Bell}].
As the model is not powerful enough to explain all quantum correlations, a natural question is to investigate what additional resources one would need in order to reproduce them.
If the observers are allowed to freely communicate, for instance, any nonsignalling behaviour --- hence, any quantum behaviour --- can be reproduced by classical models.

In the past few decades, many efforts have been dedicated to understanding what is the minimal amount of communication needed to achieve quantum correlations from classical strategies [\cref{fig:BellComm}].
Early efforts \cite{maudlin1992bell,brassard_cost_1999,steiner2000towards} and a series of improvements \cite{pati2000minimum,massar2001classical,cerf2000classical}
led to a proof that one classical bit is sufficient to reproduce the statistics of a maximally entangled two-qubit state under projective measurements, and that two bits of communication are sufficient to reproduce projective measurements on arbitrary two-qubit states~\cite{toner_communication_2003}. 
Later, it was also shown that two bits of communication are sufficient to classically simulate arbitrary positive operator-valued measure (POVM) measurements on any two-qubit state~\cite{renner_classical_2023}, and it is possible that even a single bit of communication is enough to do so~\cite{renner_minimal_2023}.

In this work, we take a different approach, and instead of investigating the minimal amount of communication needed to reproduce all quantum correlations, we fix \textit{what} can be communicated by the parties and explore which quantum correlations can then be reached \cite{chaves_unifying_2015, brask_bell_2017}.
In the first place, if one party is allowed to communicate their measurement setting, then all nonsignalling correlations can be reproduced by classical models \cite{brassard_cost_1999}.
On the other hand, if only the \textit{measurement outcome} is communicated by one party to another [\cref{fig:BellOutComm}], the answer becomes less obvious.

Such local hidden variable models augmented with outcome communication were investigated in scenarios with a finite number of measurement choices \cite{chaves_unifying_2015,brask_bell_2017}.
In this case, they can be significantly more powerful than the local model and, in particular, can reproduce any nonsignalling behaviour in the minimal scenario, including a Popescu-Rohrlich box \cite{Popescu:1994FPH}.
While previous research has analysed the power of outcome communication at the level of quantum behaviours, much less is known about models that aim to reproduce the statistics of all possible measurements on a bipartite quantum state.

\begin{figure*}[t]
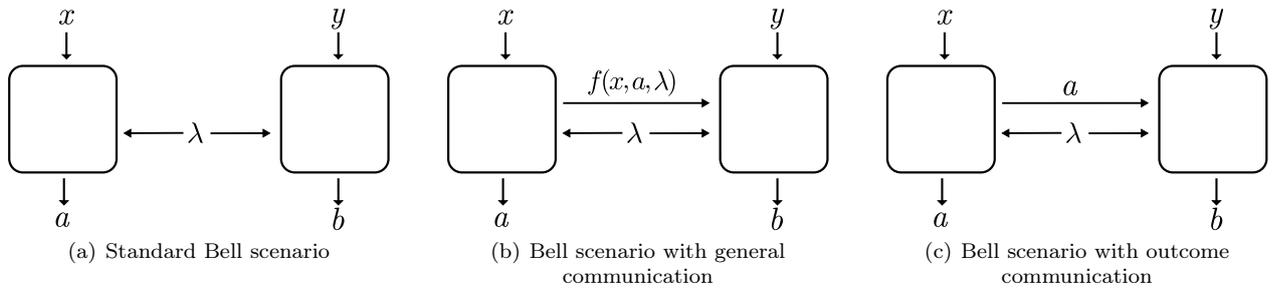
\label{fig:eapm-behaviour s}
    \centering
    \subfigure[Standard Bell scenario]{\label{fig:Bell}\includegraphics[width=.28\linewidth]{ScenarioBell.pdf}}\hspace{2em}
    \subfigure[Bell scenario with general communication]{\label{fig:BellComm}\includegraphics[width=.28\linewidth]{ScenarioBellComm.pdf}}\hspace{2em}
    \subfigure[Bell scenario with outcome communication]{\label{fig:BellOutComm}\includegraphics[width=.28\linewidth]{ScenarioBellCommOutput.pdf}}
    \caption{
      Different causal structures for bipartite Bell scenarios.
      (a) Standard Bell scenario, where the parties are correlated solely through a shared hidden variable $\lambda$ and no communication occurs.
      (b) Bell scenario with general communication from Alice to Bob, where Alice’s message $f(x, a, \lambda)$  may depend on both her input and outcome and the hidden variable $\lambda$.
      (c) Bell scenario with outcome communication, where Alice’s message to Bob is restricted to her measurement outcome.
    }
\end{figure*}

This motivates a fundamental question: how well can local hidden variable models augmented with outcome communication reproduce all correlations of a quantum state?
Surprisingly, we show that when the communicating party performs arbitrary two-outcome measurements, a state is local if and only if it is local with communication of outcomes. As a special case, outcome communication cannot explain the correlations of any nonlocal qubit-qudit state under projective measurements.
Notably, our proof of this result strongly depends on the requirement that the local model with communication must also reproduce the correlations generated when the communicating party performs a deterministic measurement, i.e., a measurement that always returns the same outcome regardless of the underlying state.
This naturally raises the question of whether the result still holds if such a measurement is not considered.
To explore this, we analyse the case of two-qubit Werner states, which reveals particularities of outcome communication models that are not seen in standard Bell scenarios.

\section{The outcome communication model}
    We start by introducing the scenario and notation.
    A bipartite Bell scenario consists of two experimenters, Alice and Bob, who can locally measure their systems \cite{Bell:1964PHY, CHSH-inequality}.
    In each run of the experiment, they independently choose their measurements, labelled respectively by $x$ and $y$, and collect the corresponding outcomes, denoted by $a$ and $b$.
    By repeating the experiment multiple times and later gathering their data, they can infer the conditional probabilities $p(ab|xy)$.
    We refer to the set of probabilities $ p(ab|xy)$ for all $a, b, x$ and $y$ as the \textit{behaviour} of the system.

    Different models have been proposed to explain the potential correlations observed in the experiment's statistics under various assumptions \cite{brassard_cost_1999, toner_communication_2003, renner_classical_2023, chaves_unifying_2015, hall_relaxed_2011, brask_bell_2017, vieira2024test}.
    When Alice and Bob are free to choose their inputs and no communication occurs during the experiment, a natural model emerges, called the LHV model \cite{Brunner:2014RMP}.

    Within this model, since communication is forbidden, all correlations between the parties must be due to a common past, which is represented by a hidden variable $\lambda$ [see \cref{fig:Bell}].
    Conditioned on this common past, the observed statistics must factorise.
    More precisely, the behaviour $p(ab|xy)$ has a local hidden variable model (i.e., it is an LHV behaviour) if there is an additional variable $\lambda$, along with probability distributions $p(\lambda)$, $p_{A}(a|x\lambda)$ and $p_{B}(b|y \lambda)$, such that
    \begin{equation}
        p(ab|xy) = \sum_\lambda p(\lambda) p_{A}(a|x \lambda) p_{B}(b|y \lambda), \,\forall a,b,x,y .
    \label{eq:lhv-model}
    \end{equation}
    When a behaviour admits an LHV model, we say that this behaviour is Bell local, or simply, local.
    These behaviours satisfy the nonsignalling conditions, that is, they are such that $\sum_a p(ab|xy) = \sum_a p(ab|x^\prime y)$ and $\sum_b p(ab|xy) = \sum_b p(ab|xy^\prime)$.

    As is well known, there are behaviours obtained through local measurements on entangled quantum states that do not admit decomposition by a local hidden variable model.
    These are called nonlocal behaviours \cite{Bell:1964PHY, CHSH-inequality}.
    A natural attempt to better understand quantum correlations is thus to consider less restrictive models \cite{chaves_unifying_2015, hall_relaxed_2011, brask_bell_2017, vieira2024test}.

    Our focus is on the model where, in addition to the common past $\lambda$, correlations can arise from Alice communicating her outcomes to Bob [\cref{fig:BellOutComm}].
    \begin{definition}[LHV+Out model]
    \label{def:lhv-out-model}
        A behaviour $p(ab|xy)$ admits an LHV+Out model (or a local hidden variable model with communication of outcomes) when there is an additional variable $\lambda$, along with probability distributions $p(\lambda)$, $p_{A}(a|x\lambda)$ and $p_{B}(b|a y \lambda)$, such that
    \begin{equation}
        p(ab|xy) = \sum_\lambda p(\lambda) p_{A}(a|x \lambda) p_{B}(b|ay \lambda), \,\forall a,b,x,y .
    \label{eq:out-lhv-model}
    \end{equation}
    \end{definition}
    Trivially, all local behaviours are also LHV+Out, but the converse is not true.
    In particular, any local behaviour is nonsignalling, but the LHV+Out model includes some signalling behaviours.
    Less obviously, one can also find nonsignalling LHV+Out behaviours that are not LHV.
    Consider, for instance, the Popescu-Rohrlich (PR) boxes \cite{Popescu:1994FPH} in the paradigmatic Clauser-Horne-Shimony-Holt (CHSH) scenario \cite{CHSH-inequality}, which are known to be (extremely) nonlocal.
    The PR-box is a nonsignalling behaviour where the outcomes and inputs respect the relation $a\oplus b = xy$, where $\oplus$ is the modulo two bit addition.
    To see that they are LHV+Out, let $\lambda \in \set{0,1}$, $p(\lambda) = 1/2$, $p_{A}(a|x, \lambda) = \delta(a,x+\lambda)$, $p_{B}(b|a,y,\lambda) = \delta(b, (a \oplus \lambda)(y\oplus 1) \oplus \lambda)$.
    Then, it follows that
   $     \sum_{\lambda = 1}^2 p(\lambda)p_{A}(a|x \lambda) p_{B}(b|a y \lambda) = \frac{1}{2}\delta(a\oplus b, xy)$. 
    In fact, in the CHSH scenario, all nonsignalling behaviours are LHV+Out \cite{brask_bell_2017}, showing that outcome communication is a very powerful resource in some Bell scenarios.
    However, in general scenarios, not all nonsignalling behaviours are LHV+Out.
    It is even possible to obtain quantum behaviours that are not LHV+Out~\cite{chaves_unifying_2015}, a fact that was also verified experimentally~ \cite{ringbauer_experimental_2016}. 
    \cref{fig:behaviours-sets} illustrates the relationships between these sets of behaviours.
    \begin{figure}
        \centering
        \includegraphics[width=.7\columnwidth]{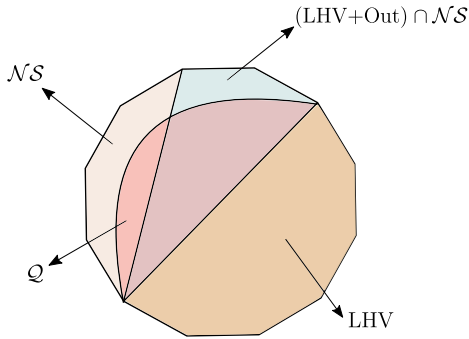}
        \caption{
          Pictorial representation of the relations between the different sets of correlations: $\text{LHV}$ denotes the of local correlations; $\mathcal{Q}$ the set of quantum correlations; $\mathcal{NS}$ the set of nonsignalling correlations; and $\text{(LHV+Out)} \cap \mathcal{NS}$ the set of local correlations augmented with outcome communication, restricted to the nonsignalling space.
          Note that $\text{LHV} \subset \mathcal{Q}$ and $ \text{LHV} \subset \text{(LHV+Out)} \cap \mathcal{NS}$.
          However, in general, $\mathcal{Q}$ and $\text{(LHV+Out)}$  $\cap$ $\mathcal{NS}$ are not subsets of one another.
        }
        \label{fig:behaviours-sets}
    \end{figure}

        Similarly to the case of Bell nonlocality, the set of LHV+Out behaviours is a polytope.
        Therefore, it is fully characterised by a finite number of extremal points, which are the deterministic strategies.
        A behaviour $p(ab|xy)$ is LHV+Out (see \cref{def:lhv-out-model}) if and only if it is a convex mixture of local deterministic processes with communication of outcomes:
        \begin{equation}\label{eq:out-lhv-deterministic-strategies-deltas}
            p(ab|xy) =  \sum_{\lambda} p(\lambda)\, D_A(a|x\lambda)D_B(b|ay\lambda),
        \end{equation}
        where $D_A(\,\cdot\, |x\lambda)$ and $D_B( \,\cdot \,|ay\lambda)$ are deterministic probability distributions.
        The proof is analogous to that of Fine's theorem \cite{Fine:1982PRL}.
    
    The notion of LHV behaviour [\cref{eq:lhv-model}] can be passed on to quantum states:
    A bipartite state $\rho$ admits an LHV model if, for any sets of local measurements $\set{A_{a|x}}$ and $\set{B_{b|y}} $, the behaviour with probabilities $p(ab|xy) = \tr\left[ \left( A_{a|x} \otimes B_{b|y} \right)  \rho \right]$ has an LHV model [\cref{eq:lhv-model}], where $\set{A_{a|x}}$ is a measurement if $A_{a|x} \succcurlyeq 0$ for any $a,x$, with $\sum_a A_{a|x} = \eye$ for any $x$, and analogously for $\set{B_{b|y}} $.
    Characterising the set of states that admits an LHV model is a central goal in nonlocality and the topic of an extensive literature, e.g., \cite{werner_quantum_1989, acin2006grothendieck, cavalcanti2016general, hirsch2016algorithmic, hirsch2017betterlocalhidden, DIB+23, augusiak2014review}.
    We extend this definition to the case of the LHV+Out model.
    \begin{definition}[LHV+Out states]
        A bipartite quantum state $\rho$ admits an LHV+Out model for the set of measurements $\set{ A_{a|x} }$ and $\set{ B_{b|y} }$ when the behaviour $p(ab|xy) = \tr[ (A_{a|x} \otimes B_{b|y}) \rho]$ admits an LHV+Out model [\cref{eq:out-lhv-model}].
        We say that $\rho$ admits an LHV+Out model when it admits an LHV+Out model for all possible sets of measurements $\set{ A_{a|x} }$ and $\set{ B_{b|y} }$.
    \label{def:out-lhv-states}
    \end{definition}

    As is usual in nonlocality, it is often relevant to analyse the case where the sets of $\set{ A_{a|x} }$ and $\set{ B_{b|y} }$ are the sets of all projective measurements, i.e., when $A_{a|x} A_{a^\prime|x} = \delta_{a, a^\prime} A_{a|x}$ and $B_{b|y} B_{b^\prime|y} = \delta_{b, b^\prime} B_{b|y}$ for any $x,y$.
    We remark that, as of today, there is no example where non-projective measurements, i.e., positive operator-valued measurements (POVM), are necessary for nonlocality \cite{RennerCompatibility2024}.

    \begin{remark}
        It follows from the definitions that any state that admits an LHV model also admits an LHV+Out model for any set of measurements.
        On the other hand, since the LHV+Out model for behaviours is generally more powerful than the LHV model, there could exist nonlocal states that admit an LHV+Out model.
        This should perhaps even be expected, as the LHV+Out model allows for a considerable amount of signalling, and will be further discussed at a later point.
    \end{remark}

    To show that a given behaviour $p(ab|xy)$ admits an LHV+Out model, it is enough to check whether $p(ab|xy)$ is inside the polytope given by the convex hull of \cref{eq:out-lhv-deterministic-strategies-deltas}.
    On the other hand, showing that a state $\rho$ admits an LHV+Out model is a much more challenging problem, since one has to consider all possible measurements that may be performed by Alice and Bob.
    This difference is analogous to the standard LHV scenario where no communication is allowed \cite{cavalcanti2016general,hirsch2016algorithmic, hirsch2017betterlocalhidden, DIB+23}.

\section{Outcome communication cannot explain nonlocality}
    One of our main results in this section shows the rather counter-intuitive fact that outcome communication provides no help in explaining the statistics of projective measurements on qubit-qudit states.
    Put another way, it shows that a qubit-qudit state has an LHV+Out model for projective measurements if and only if it has an LHV model.
    
    Before presenting this result, we first address a more elementary question: under which conditions can we construct an LHV model from an LHV+Out model? We formalise this in the following theorem, which will serve as a key tool for proving our main result.
    \begin{restatable}{theorem}{resthm}\label{thm:LHV+Out_Eq_LHV_Behaviours}
        Let $p(ab|xy)$ be a nonsignalling behaviour where Alice's outcomes are dichotomic $(a \in \set{\pm1})$.
        Suppose that $p(ab|xy)$ admits an LHV+Out model and that there exists an input $x'$ such that $p_{A}(\, \cdot \,|x')$ is deterministic.
        Then, $p(ab|xy)$ admits an LHV model.
    \end{restatable}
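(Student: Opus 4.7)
The plan is to construct an explicit LHV decomposition directly from the given LHV+Out decomposition, exploiting the dichotomy $a \in \set{\pm 1}$ together with the deterministic input $x'$ to eliminate Bob's dependence on Alice's outcome. As a preliminary reduction, by relabelling Alice's outputs I may assume $p_A(+1|x') = 1$. Because $p_A(+1|x') = \sum_\lambda p(\lambda)\, p_A(+1|x'\lambda)$ is a convex combination of quantities in $[0,1]$ that equals $1$, it forces $p_A(+1|x'\lambda) = 1$ for every $\lambda$ with $p(\lambda)>0$. This is the only place where the determinism of $x'$ enters the argument.

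Next, I would introduce the shorthand $p_B^{\pm}(b|y\lambda) := p_B(b|\pm 1, y, \lambda)$, so the LHV+Out decomposition becomes $p(ab|xy) = \sum_\lambda p(\lambda)\, p_A(a|x\lambda)\, p_B^a(b|y\lambda)$. Applying the nonsignalling condition $\sum_a p(ab|xy) = p(b|y)$ first to $x = x'$ yields $p(b|y) = \sum_\lambda p(\lambda)\, p_B^+(b|y\lambda)$, and then to an arbitrary $x$, after using $p_A(+1|x\lambda) + p_A(-1|x\lambda) = 1$, I obtain the key identity
\begin{equation*}
    \sum_\lambda p(\lambda)\, p_A(-1|x\lambda)\,\bigl[p_B^-(b|y\lambda) - p_B^+(b|y\lambda)\bigr] = 0, \quad \forall x, y, b.
\end{equation*}

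Finally, I would propose the LHV model with $q(\lambda) = p(\lambda)$, $q_A(a|x\lambda) = p_A(a|x\lambda)$, and $q_B(b|y\lambda) = p_B^+(b|y\lambda)$, which is manifestly independent of $a$. Matching against the LHV+Out decomposition is immediate for $a = +1$, since there $p_B^a = p_B^+$; for $a = -1$ the discrepancy between the two decompositions is precisely the left-hand side of the identity above, which vanishes. I do not anticipate a serious obstacle, as the argument is essentially algebraic. The conceptual heart is recognizing that dichotomicity reduces Bob's conditional strategies to just two options, $p_B^{\pm}$, so that a single deterministic Alice input pins down the mean of $p_B^+$ and nonsignalling then constrains every other $x$ to yield the same mean, forcing the correction involving $p_B^- - p_B^+$ to cancel. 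The only minor subtlety is the handling of $\lambda$ outside $\mathrm{supp}(p)$, which is immaterial since such $\lambda$ carry zero weight and the strategies there can be redefined at will.
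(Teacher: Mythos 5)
Your proof is correct and follows essentially the same route as the paper: both arguments use the determinism of $x'$ to force $p_A(+1|x'\lambda)=1$ on the support of $p(\lambda)$, then invoke nonsignalling to identify Bob's marginal with $\sum_\lambda p(\lambda)\,p_B(b|{+1},y,\lambda)$, and finally replace $p_B(b|a,y,\lambda)$ by $p_B(b|{+1},y,\lambda)$ throughout. Your explicit isolation of the cancellation identity $\sum_\lambda p(\lambda)\,p_A(-1|x\lambda)\bigl[p_B^-(b|y\lambda)-p_B^+(b|y\lambda)\bigr]=0$ is only a cosmetic reorganisation of the paper's computation of $p(-1,b|xy)=p_B(b|y)-p(1,b|xy)$.
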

    The proof of \cref{thm:LHV+Out_Eq_LHV_Behaviours} is provided in \cref{sec: appendix proof Theorem}.
    We now turn to our main result.
    \begin{result}\label{res:MainResult}
    Let $\set{A_{a|x}}$ be a set of dichotomic measurements containing the deterministic measurement $\set{\eye,0}$, and $\set{B_{b|y}}$ be an arbitrary set of measurements. A bipartite quantum state $\rho\in\mathcal{L}(\mathcal{H}_A\otimes\mathcal{H}_B)$ admits an LHV+Out model for the measurements $\set{A_{a|x}}$ and $\set{B_{b|y}}$ if and only if  $\rho$ admits an LHV model for the measurements $\set{A_{a|x}}$ and $\set{B_{b|y}}$.
        
        In particular, if $\rho \in \mathcal{L}(\mathbb{C}^2\otimes\mathbb{C}^d)$ is a qubit-qudit state, then under projective measurements, $\rho$ admits an LHV model if and only if it admits an LHV+Out model.
    \end{result}
    The proof of \cref{res:MainResult}  follows directly from \cref{thm:LHV+Out_Eq_LHV_Behaviours}. When Alice performs the deterministic measurements, her marginal is necessarily deterministic, regardless of the quantum state $\rho$. Hence, if the associated behaviour admits an LHV+Out model, \cref{thm:LHV+Out_Eq_LHV_Behaviours} ensures that it must also admit an LHV model. As for the second part of \cref{res:MainResult}, notice that the set of all projective measurements includes the deterministic measurement $\set{\eye,0}$, and for qubits, projective measurements are necessarily dichotomic.

    Although the LHV+Out model is signalling and can even simulate PR boxes in the CHSH scenario, our main result shows that it offers no advantage over purely local models for qubit-qudit states.
    In contrast, if we allow the transmitted bit to be a function $f(x, a, \lambda) \in \set{0, 1}$ of Alice's input and the shared variable [see \cref{fig:BellComm}], instead of being directly the outcome, then any behaviour of a two-qubit maximally entangled state can be reproduced \cite{toner_communication_2003, renner_classical_2023}.
    This shows an extreme difference in the advantage that these two types of communication allow in the task of simulating quantum correlations.

    A curious feature of \cref{res:MainResult} is the prominent role played by a deterministic quantum measurement $\set{\eye, 0}$. At an intuitive level, the combination of such a dichotomic deterministic measurement with the nonsignalling constraint severely restricts the structure of outcome-dependent strategies in an LHV+Out model. Indeed, if Alice has a deterministic measurement, then for this input, her outcome is fixed and therefore is uncorrelated with the hidden variable. Together with the assumption of nonsignalling and the dichotomic nature of Alice’s outcomes, this forces Bob’s response functions to become effectively independent of Alice’s outcome [see \cref{eq:p_B_Model}]. As a result, any LHV+Out model compatible with these constraints can be reduced to a standard LHV model.

    To our knowledge, no other example in the literature of Bell nonlocality assigns such a crucial role to deterministic measurements.
    In fact, in standard LHV models for quantum states, it is straightforward to account for a deterministic measurement by coarse-graining the outcomes of a nondeterministic one.
    More fundamentally, deterministic measurements usually are not even considered genuinely quantum, since they can be implemented by simply discarding the quantum system and outputting the outcome that occurs with unit probability.
    This raises the question of whether deterministic measurements are really the underlying factors leading to the equivalence between LHV and LHV+Out states, or whether a more subtle condition can be identified.
    
    It is well known that the extremal qubit dichotomic measurements are the deterministic measurements $\set{\eye, 0}$, $\set{0, \eye}$, and rank-1 projective measurements which can be represented in the Bloch sphere $M_{\pm} = \frac{\eye \pm \vec{n}\cdot \vec{\sigma}}{2}$ \cite{dariano2005extremal}.
    It is thus natural to investigate whether a qubit-qudit state which is LHV+Out with respect to rank-1 projective measurements but not with respect to the deterministic measurements can violate a Bell inequality.
    We show that this is not the case for the arguably most natural candidate two-qubit states, namely, the two-qubit Werner states $W(v) = v \dyad{\psi^-} + (1-v)\tfrac{\eye}{4}$, where $\ket{\psi^-} = (\ket{01} - \ket{10})/\sqrt{2}$ and $v \in [0,1]$.

    \begin{restatable}{result}{werneroutlhv}\label{res:werner-outlhv}
        For the set of all rank-1 projective measurements, 
        a Werner state $W(v)$ admits an LHV+Out model 
        if and only if it admits an LHV model.
    \end{restatable}
    
    This result follows from an adaptation of Ref.~\cite[Lemma II.1]{divianszky_certification_2023} (see \cref{app:proof-result-2} for details). 
    In contrast, if Alice's measurements are restricted to lie on a single hemisphere of the Bloch sphere, the situation changes: LHV and LHV+Out models for the Werner state exhibit a distinct separation.

    \begin{result} \label{res:lhvout_dif_lhv}
        If Alice's measurements are all projective measurements lying on the upper hemisphere of the Bloch sphere, the Werner state $W(v)$ admits an LHV+Out model for $v \leq 0.69828$.
        Meanwhile, it is known to violate a Bell inequality for $v > 0.69604$ \cite{designolle2024better}. 
    \end{result}
    Curiously, in the LHV case, having a model for projective measurements within a single hemisphere already suffices to extend it to all dichotomic measurements, since antipodal measurements are equivalent via outcome relabelling. \cref{res:lhvout_dif_lhv} shows this is not the case for LHV+Out models.

    To construct this LHV+Out model, we extend computational techniques for studying local models (namely, polytope approximations~\cite{cavalcanti2016general, hirsch2016algorithmic, hirsch2017betterlocalhidden, gois2021general, DIB+23} and Frank-Wolfe algorithms~\cite{BCC+22, DIB+23, besancon2025improved}) to the LHV+Out scenario. These techniques are discussed in \cref{app:proof-result-3}, and an explicit model is provided in a repository \cite{zenodo_sm}.
    
    The above results suggest that a key structural requirement behind the equivalence between LHV and LHV+Out models is the presence of antipodal measurements. In particular, whenever an LHV+Out model is required for a measurement $x = \set{A_0, A_1}$, if one simultaneously demands a model for the swapped measurement $x' = \set{A_1, A_0}$, the additional freedom provided by outcome-dependent strategies appears to become ineffective. Intuitively, this outcome-inversion symmetry constrains Bob’s response functions to be compatible with both orientations of the same measurement, preventing them from exploiting Alice’s outcome as an independent source of information. This observation motivates the open question formulated below, which aims to characterise whether such antipodal symmetry is sufficient, at the level of nonsignalling behaviours, to ensure that LHV+Out models always reduce to standard LHV ones.

    \begin{open}\label{open_prob}
        Let $p(ab|xy)$ be a nonsignalling behaviour where $x \in \set{0, ..., 2m_x-1}$, $a \in \set{\pm1}$ and $p(ab|xy) = p(-a, b|x + m_x, y)$, with addition modulo $2m_x$. If $p(ab|xy)$ admits an LHV+Out model, does it also admit an LHV model?
    \end{open}

    Our findings provide evidence pointing towards an affirmative answer to the above question.
    First, in the simplest case where Bob's measurements are also dichotomic and both Alice's and Bob's marginals are uniform, the discussion presented in \cref{app:proof-result-2} establishes that the answer is indeed affirmative.
    Second, using linear programming, we computationally verified the same conclusion in scenarios where $m_x \leq 4$.
    Finally, we have sampled a large set of random behaviours respecting the hypothesis of the open question, and no counterexample was found.
   
\section{Discussion}
    In this work, we defined the notion of LHV+Out models for quantum states.
    We then established general conditions under which a behaviour that admits an LHV+Out model also admits an LHV model (\cref{thm:LHV+Out_Eq_LHV_Behaviours}), which allowed us to identify scenarios in which the LHV and LHV+Out models are equivalent at the level of quantum states.
    In particular, we showed that when Alice's measurements are dichotomic, a quantum state admits an LHV model if and only if it admits an LHV+Out model.
    As a consequence, any qubit–qudit state under projective measurements satisfies this equivalence (\cref{res:MainResult}).
    Notably, the equivalence established in \cref{res:MainResult} relies on the presence of a deterministic measurement on Alice. This naturally led us to investigate whether such deterministic measurements are indeed necessary for the equivalence between LHV+Out and LHV models.
    For the case of two-qubit Werner states, we showed that for rank-1 projective measurements the equivalence holds regardless (\cref{res:werner-outlhv}). However, restricting Alice’s measurements to a single hemisphere of the Bloch sphere leads to a strict separation between the two models (\cref{res:lhvout_dif_lhv}). 
    
    These findings suggest that the key requirement behind the equivalence is the presence of antipodal measurements, precisely formulated in \cref{open_prob}. We emphasise that, should the \cref{open_prob} be answered affirmatively, the equivalence established in \cref{res:werner-outlhv} would extend beyond Werner states and hold for arbitrary qubit–qudit states under projective measurements.

    From the proof techniques used, we suspect that \cref{res:MainResult} strongly relies on the fact that one party is restricted to dichotomic measurements.
    Also, note that a quantum measurement with two outcomes is completely described by a single measurement operator, since the other one is implicitly defined by the measurement normalisation condition.
    This is not the case for measurements with more than two outcomes, and this particularity of dichotomic measurements is one of the reasons for results like Gleason's theorem~\cite{Gleason:1957JMM} not to hold on the dichotomic case.
    In this regard, it is important to stress that \cref{thm:LHV+Out_Eq_LHV_Behaviours} does not extend in a straightforward manner beyond the dichotomic setting. In fact, one can construct simple counterexamples by embedding a PR-box~\cite{Popescu:1994FPH} into a larger outcome space and adding an extra measurement that is fully deterministic on the newly introduced outcome. The resulting behaviour admits an LHV+Out model and contains a deterministic measurement, yet it cannot admit an LHV model, as this would imply locality of the underlying PR correlations. Therefore, we expect LHV+Out models beyond dichotomic measurements to be a rich and interesting area to be explored, which should help us to understand quantum correlations in communication scenarios.

    Finally, an interesting observation is that marginalising over Alice's outcome in an LHV+Out model yields a classical model in the context of entanglement-assisted classical communication (EACC) scenarios \cite{tavakoli2021correlations, pauwels2021entanglement, Frenkel2022entanglement,Moreno_Semi_device_2021}.
    Therefore, the techniques and results developed in this work may find further applications in the study of EACC protocols, particularly in understanding the classical simulation of quantum correlations assisted by entanglement \cite{vieira_interplays_2023, adaptive2022pauwels}.

\section*{Acknowledgments}
We thank C.~Duarte, M.~Renner, B.~Rizzuti, S.~Schlösser, T.~Vértesi, E.~Wolfe for fruitful discussions.
This work was supported by the São Paulo Research Foundation (FAPESP) under grants No. 2024/16657-3 and 2025/01058-0, by the Brazilian National Council for Scientific and Technological Development (CNPq) under grant no.~445328/2024-0, by \siegenfinancialsupport the European Union's Horizon 2020 Research and Innovation Programme under QuantERA Grant Agreement no.\ 731473 and 101017733, and QuantEdu France, a State aid managed by the French National Research Agency for France 2030 with the reference ANR-22-CMAS-0001. MTQ acknologes the funding Tremplins nouveaux entrants \& nouvelles entrantes - Edition 2024, project HOQO-KS.
Research at Perimeter Institute is supported by the Government of Canada through the Department of Innovation, Science and Economic Development Canada and by the Province of Ontario through the Ministry of Research, Innovation and Science.
Research reported in this paper was partially supported through the Research Campus Modal funded by the German Federal Ministry of Education and Research (fund numbers 05M14ZAM, 05M20ZBM) and the Deutsche Forschungsgemeinschaft (DFG) through the DFG Cluster of Excellence MATH+.

\bibliographystyle{quantum}
\bibliography{1_bibliography}

@ARTICLE{augusiak2014review,
       author = {{Augusiak}, R. and {Demianowicz}, M. and {Ac{\'\i}n}, A.},
        title = "{Local hidden-variable models for entangled quantum states}",
      journal = {Journal of Physics A Mathematical General},
     keywords = {Quantum Physics},
         year = 2014,
        month = oct,
       volume = {47},
       number = {42},
          eid = {424002},
        pages = {424002},
          doi = {10.1088/1751-8113/47/42/424002},
archivePrefix = {arXiv},
       eprint = {1405.7321},
 primaryClass = {quant-ph},
       adsurl = {https://ui.adsabs.harvard.edu/abs/2014JPhA...47P4002A},
      adsnote = {Provided by the SAO/NASA Astrophysics Data System}
}

@ARTICLE{dariano2005extremal,
       author = {{Mauro D'Ariano}, Giacomo and {Lo Presti}, Paoloplacido and {Perinotti}, Paolo},
        title = "{Classical randomness in quantum measurements}",
      journal = {Journal of Physics A Mathematical General},
     keywords = {Quantum Physics},
         year = 2005,
        month = jul,
       volume = {38},
       number = {26},
        pages = {5979-5991},
          doi = {10.1088/0305-4470/38/26/010},
archivePrefix = {arXiv},
       eprint = {quant-ph/0408115},
 primaryClass = {quant-ph},
       adsurl = {https://ui.adsabs.harvard.edu/abs/2005JPhA...38.5979M},
      adsnote = {Provided by the SAO/NASA Astrophysics Data System}
}

@ARTICLE{DIB+23,
       author = {{Designolle}, S{\'e}bastien and {Iommazzo}, Gabriele and {Besan{\c{c}}on}, Mathieu and {Knebel}, Sebastian and {Gel{\ss}}, Patrick and {Pokutta}, Sebastian},
        title = "{Improved local models and new Bell inequalities via Frank-Wolfe algorithms}",
      journal = {Physical Review Research},
     keywords = {Quantum Physics, Mathematics - Optimization and Control},
         year = 2023,
        month = oct,
       volume = {5},
       number = {4},
          eid = {043059},
        pages = {043059},
          doi = {10.1103/PhysRevResearch.5.043059},
archivePrefix = {arXiv},
       eprint = {2302.04721},
 primaryClass = {quant-ph},
       adsurl = {https://ui.adsabs.harvard.edu/abs/2023PhRvR...5d3059D},
      adsnote = {Provided by the SAO/NASA Astrophysics Data System}
}

@ARTICLE{GENOVESE2005319,
       author = {{Genovese}, Marco},
        title = "{Research on hidden variable theories: A review of recent progresses}",
      journal = {Physics Reports},
     keywords = {Quantum Physics},
         year = 2005,
        month = jul,
       volume = {413},
       number = {6},
        pages = {319-396},
          doi = {10.1016/j.physrep.2005.03.003},
archivePrefix = {arXiv},
       eprint = {quant-ph/0701071},
 primaryClass = {quant-ph},
       adsurl = {https://ui.adsabs.harvard.edu/abs/2005PhR...413..319G},
      adsnote = {Provided by the SAO/NASA Astrophysics Data System}
}

@article{maudlin1992bell,
	title = {Bell’s inequality, information transmission, and prism models},
	volume = {1992},
	copyright = {https://www.cambridge.org/core/terms},
	issn = {0270-8647, 2327-9486},
	url = {https://www.cambridge.org/core/product/identifier/S0270864700004926/type/journal_article},
	doi = {10.1086/psaprocbienmeetp.1992.1.192771},
	number = {1},
	urldate = {2024-11-26},
	journal = {PSA: Proceedings of the Biennial Meeting of the Philosophy of Science Association},
	author = {Maudlin, Tim},
	year = {1992},
	pages = {404--417},
}

@ARTICLE{brassard_cost_1999,
       author = {{Brassard}, Gilles and {Cleve}, Richard and {Tapp}, Alain},
        title = "{Cost of Exactly Simulating Quantum Entanglement with Classical Communication}",
      journal = {Phys. Rev. Lett.},
     keywords = {Quantum Physics},
         year = 1999,
        month = aug,
       volume = {83},
       number = {9},
        pages = {1874-1877},
          doi = {10.1103/PhysRevLett.83.1874},
archivePrefix = {arXiv},
       eprint = {quant-ph/9901035},
 primaryClass = {quant-ph},
       adsurl = {https://ui.adsabs.harvard.edu/abs/1999PhRvL..83.1874B},
      adsnote = {Provided by the SAO/NASA Astrophysics Data System}
}

@ARTICLE{steiner2000towards,
       author = {{Steiner}, Michael},
        title = "{Towards quantifying non-local information transfer: finite-bit non-locality}",
      journal = {Physics Letters A},
     keywords = {Quantum Physics},
         year = 2000,
        month = jun,
       volume = {270},
       number = {5},
        pages = {239-244},
          doi = {10.1016/S0375-9601(00)00315-7},
archivePrefix = {arXiv},
       eprint = {quant-ph/9902014},
 primaryClass = {quant-ph},
       adsurl = {https://ui.adsabs.harvard.edu/abs/2000PhLA..270..239S},
      adsnote = {Provided by the SAO/NASA Astrophysics Data System}
}

@ARTICLE{pati2000minimum,
       author = {{Pati}, Arun K.},
        title = "{Minimum classical bit for remote preparation and measurement of a qubit}",
      journal = {Phys. Rev. A},
     keywords = {03.67.-a, 03.65.Ta, Quantum information, Foundations of quantum mechanics, measurement theory, Quantum Physics},
         year = 2000,
        month = dec,
       volume = {63},
       number = {1},
          eid = {014302},
        pages = {014302},
          doi = {10.1103/PhysRevA.63.014302},
archivePrefix = {arXiv},
       eprint = {quant-ph/9907022},
 primaryClass = {quant-ph},
       adsurl = {https://ui.adsabs.harvard.edu/abs/2000PhRvA..63a4302P},
      adsnote = {Provided by the SAO/NASA Astrophysics Data System}
}

@ARTICLE{massar2001classical,
       author = {{Massar}, Serge and {Bacon}, Dave and {Cerf}, Nicolas J. and {Cleve}, Richard},
        title = "{Classical simulation of quantum entanglement without local hidden variables}",
      journal = {Phys. Rev. A},
     keywords = {03.67.Hk, 03.67.Lx, Quantum communication, Quantum computation, Quantum Physics},
         year = 2001,
        month = may,
       volume = {63},
       number = {5},
          eid = {052305},
        pages = {052305},
          doi = {10.1103/PhysRevA.63.052305},
archivePrefix = {arXiv},
       eprint = {quant-ph/0009088},
 primaryClass = {quant-ph},
       adsurl = {https://ui.adsabs.harvard.edu/abs/2001PhRvA..63e2305M},
      adsnote = {Provided by the SAO/NASA Astrophysics Data System}
}

@ARTICLE{cerf2000classical,
       author = {{Cerf}, N.~J. and {Gisin}, N. and {Massar}, S.},
        title = "{Classical Teleportation of a Quantum Bit}",
      journal = {Phys. Rev. Lett.},
     keywords = {Quantum Physics},
         year = 2000,
        month = mar,
       volume = {84},
       number = {11},
        pages = {2521-2524},
          doi = {10.1103/PhysRevLett.84.2521},
archivePrefix = {arXiv},
       eprint = {quant-ph/9906105},
 primaryClass = {quant-ph},
       adsurl = {https://ui.adsabs.harvard.edu/abs/2000PhRvL..84.2521C},
      adsnote = {Provided by the SAO/NASA Astrophysics Data System}
}

@ARTICLE{renner_classical_2023,
       author = {{Renner}, Martin J. and {Tavakoli}, Armin and {Quintino}, Marco T{\'u}lio},
        title = "{Classical Cost of Transmitting a Qubit}",
      journal = {Phys. Rev. Lett.},
     keywords = {Quantum Physics},
         year = 2023,
        month = mar,
       volume = {130},
       number = {12},
          eid = {120801},
        pages = {120801},
          doi = {10.1103/PhysRevLett.130.120801},
archivePrefix = {arXiv},
       eprint = {2207.02244},
 acin2006grothendieckprimaryClass = {quant-ph},
       adsurl = {https://ui.adsabs.harvard.edu/abs/2023PhRvL.130l0801R},
      adsnote = {Provided by the SAO/NASA Astrophysics Data System}
}

@ARTICLE{vieira_interplays_2023,
       author = {{Vieira}, Carlos and {de Gois}, Carlos and {Pollyceno}, Lucas and {Rabelo}, Rafael},
        title = "{Interplays between classical and quantum entanglement-assisted communication scenarios}",
      journal = {New Journal of Physics},
     keywords = {quantum communication, entanglement, correlations, Quantum Physics},
         year = 2023,
        month = nov,
       volume = {25},
       number = {11},
          eid = {113004},
        pages = {113004},
          doi = {10.1088/1367-2630/ad0526},
archivePrefix = {arXiv},
       eprint = {2205.05171},
 primaryClass = {quant-ph},
       adsurl = {https://ui.adsabs.harvard.edu/abs/2023NJPh...25k3004V},
      adsnote = {Provided by the SAO/NASA Astrophysics Data System}
}

@article{werner_quantum_1989,
	title = {Quantum states with {Einstein}-{Podolsky}-{Rosen} correlations admitting a hidden-variable model},
	volume = {40},
	issn = {0556-2791},
	url = {https://link.aps.org/doi/10.1103/PhysRevA.40.4277},
	doi = {10.1103/PhysRevA.40.4277},
	number = {8},
	urldate = {2024-02-06},
	journal = {Physical Review A},
	author = {Werner, Reinhard F.},
	month = oct,
	year = {1989},
	pages = {4277--4281},
}

@ARTICLE{renner_minimal_2023,
       author = {{Renner}, Martin J. and {Quintino}, Marco T{\'u}lio},
        title = "{The minimal communication cost for simulating entangled qubits}",
      journal = {Quantum},
     keywords = {Quantum Physics},
         year = 2023,
        month = oct,
       volume = {7},
        pages = {1149},
          doi = {10.22331/q-2023-10-24-1149},
archivePrefix = {arXiv},
       eprint = {2207.12457},
 primaryClass = {quant-ph},
       adsurl = {https://ui.adsabs.harvard.edu/abs/2023Quant...7.1149R},
      adsnote = {Provided by the SAO/NASA Astrophysics Data System}
}

@ARTICLE{toner_communication_2003,
       author = {{Toner}, B.~F. and {Bacon}, D.},
        title = "{Communication cost of simulating {B}ell correlations}",
      journal = {Phys. Rev. Lett.},
     keywords = {03.67.Hk, 03.65.Ud, 03.65.Ta, 03.67.Mn, Quantum communication, Entanglement and quantum nonlocality, Foundations of quantum mechanics, measurement theory, Entanglement production characterization and manipulation, Quantum Physics},
         year = 2003,
        month = oct,
       volume = {91},
       number = {18},
          eid = {187904},
        pages = {187904},
          doi = {10.1103/PhysRevLett.91.187904},
archivePrefix = {arXiv},
       eprint = {quant-ph/0304076},
 primaryClass = {quant-ph},
       adsurl = {https://ui.adsabs.harvard.edu/abs/2003PhRvL..91r7904T},
      adsnote = {Provided by the SAO/NASA Astrophysics Data System}
}

@ARTICLE{adaptive2022pauwels,
       author = {{Pauwels}, Jef and {Pironio}, Stefano and {Cruzeiro}, Emmanuel Zambrini and {Tavakoli}, Armin},
        title = "{Adaptive Advantage in Entanglement-Assisted Communications}",
      journal = {Phys. Rev. Lett.},
     keywords = {Quantum Physics},
         year = 2022,
        month = sep,
       volume = {129},
       number = {12},
          eid = {120504},
        pages = {120504},
          doi = {10.1103/PhysRevLett.129.120504},
archivePrefix = {arXiv},
       eprint = {2203.05372},
 primaryClass = {quant-ph},
       adsurl = {https://ui.adsabs.harvard.edu/abs/2022PhRvL.129l0504P},
      adsnote = {Provided by the SAO/NASA Astrophysics Data System}
}

@ARTICLE{gois2021general,
       author = {{de Gois}, Carlos and {Moreno}, George and {Nery}, Ranieri and {Brito}, Samura{\'\i} and {Chaves}, Rafael and {Rabelo}, Rafael},
        title = "{General Method for Classicality Certification in the Prepare and Measure Scenario}",
      journal = {PRX Quantum},
     keywords = {Quantum Physics},
         year = 2021,
        month = jul,
       volume = {2},
       number = {3},
          eid = {030311},
        pages = {030311},
          doi = {10.1103/PRXQuantum.2.030311},
archivePrefix = {arXiv},
       eprint = {2101.10459},
 primaryClass = {quant-ph},
       adsurl = {https://ui.adsabs.harvard.edu/abs/2021PRXQ....2c0311D},
      adsnote = {Provided by the SAO/NASA Astrophysics Data System}
}

@ARTICLE{hirsch2016algorithmic,
       author = {{Hirsch}, Flavien and {Quintino}, Marco T{\'u}lio and {V{\'e}rtesi}, Tam{\'a}s and {Pusey}, Matthew F. and {Brunner}, Nicolas},
        title = "{Algorithmic Construction of Local Hidden Variable Models for Entangled Quantum States}",
      journal = {Phys. Rev. Lett.},
     keywords = {Quantum Physics},
         year = 2016,
        month = nov,
       volume = {117},
       number = {19},
          eid = {190402},
        pages = {190402},
          doi = {10.1103/PhysRevLett.117.190402},
archivePrefix = {arXiv},
       eprint = {1512.00262},
 primaryClass = {quant-ph},
       adsurl = {https://ui.adsabs.harvard.edu/abs/2016PhRvL.117s0402H},
      adsnote = {Provided by the SAO/NASA Astrophysics Data System}
}

@ARTICLE{cavalcanti2016general,
       author = {{Cavalcanti}, D. and {Guerini}, L. and {Rabelo}, R. and {Skrzypczyk}, P.},
        title = "{General Method for Constructing Local Hidden Variable Models for Entangled Quantum States}",
      journal = {Phys. Rev. Lett.},
     keywords = {Quantum Physics},
         year = 2016,
        month = nov,
       volume = {117},
       number = {19},
          eid = {190401},
        pages = {190401},
          doi = {10.1103/PhysRevLett.117.190401},
archivePrefix = {arXiv},
       eprint = {1512.00277},
 primaryClass = {quant-ph},
       adsurl = {https://ui.adsabs.harvard.edu/abs/2016PhRvL.117s0401C},
      adsnote = {Provided by the SAO/NASA Astrophysics Data System}
}

@ARTICLE{hirsch2017betterlocalhidden,
       author = {{Hirsch}, Flavien and {Quintino}, Marco T{\'u}lio and {V{\'e}rtesi}, Tam{\'a}s and {Navascu{\'e}s}, Miguel and {Brunner}, Nicolas},
        title = "{Better local hidden variable models for two-qubit {W}erner states and an upper bound on the {G}rothendieck constant {$K_G(3)$}}",
      journal = {Quantum},
     keywords = {Quantum Physics, Mathematical Physics, Mathematics - Functional Analysis},
         year = 2017,
        month = apr,
       volume = {1},
        pages = {3},
          doi = {10.22331/q-2017-04-25-3},
archivePrefix = {arXiv},
       eprint = {1609.06114},
 primaryClass = {quant-ph},
       adsurl = {https://ui.adsabs.harvard.edu/abs/2017Quant...1....3H},
      adsnote = {Provided by the SAO/NASA Astrophysics Data System}
}

@ARTICLE{besancon2022frankwolfe,
       author = {{Besan{\c{c}}on}, Mathieu and {Carderera}, Alejandro and {Pokutta}, Sebastian},
        title = "{FrankWolfe.jl: a high-performance and flexible toolbox for Frank-Wolfe algorithms and Conditional Gradients}",
      journal = {arXiv e-prints},
     keywords = {Mathematics - Optimization and Control},
         year = 2021,
        month = apr,
          eid = {arXiv:2104.06675},
        pages = {arXiv:2104.06675},
          doi = {10.48550/arXiv.2104.06675},
}

@PREAMBLE{
 "\providecommand{\noopsort}[1]{}" 
 # "\providecommand{\singleletter}[1]{#1}%" 
}

@ARTICLE{designolle2024better,
       author = {{Designolle}, S{\'e}bastien and {V{\'e}rtesi}, Tam{\'a}s and {Pokutta}, Sebastian},
        title = "{Better bounds on Grothendieck constants of finite orders}",
      journal = {arXiv e-prints},
     keywords = {Mathematics - Optimization and Control, Quantum Physics},
         year = 2024,
        month = sep,
          eid = {arXiv:2409.03739},
        pages = {arXiv:2409.03739},
          doi = {10.48550/arXiv.2409.03739},
}

@ARTICLE{BCC+22,
       author = {{Braun}, G\textbackslash'abor and {Carderera}, Alejandro and {Combettes}, Cyrille W. and {Hassani}, Hamed and {Karbasi}, Amin and {Mokhtari}, Aryan and {Pokutta}, Sebastian},
        title = "{Conditional Gradient Methods}",
      journal = {arXiv e-prints},
     keywords = {Optimization and Control},
         year = 2022,
        month = nov,
          eid = {arXiv:2211.14103},
        pages = {arXiv:2211.14103},
          doi = {10.48550/arXiv.2211.14103},
}

@Article{Bell:1964PHY,
  author  = {Bell, John S.},
  journal = {Physics},
  pages   = {195--200},
  title   = {On the {E}instein {P}odolsky {R}osen paradox},
  volume  = {1},
  year    = {1964},
  doi     = {10.1103/PhysicsPhysiqueFizika.1.195},
}

@ARTICLE{Brunner:2014RMP,
  author       = {Brunner, Nicolas and Cavalcanti, Daniel and Pironio, Stefano and Scarani, Valerio and Wehner, Stephanie},
  title        = {Bell nonlocality},
  journal      = {Reviews of Modern Physics},
  year         = 2014,
  month        = apr,
  volume       = {86},
  number       = {2},
  pages        = {419--478},
  doi          = {10.1103/RevModPhys.86.419},
  archivePrefix= {arXiv},
  eprint       = {1303.2849},
  primaryClass = {quant-ph},
  keywords     = {03.65.Ud, Entanglement and quantum nonlocality, Quantum Physics},
  adsurl       = {https://ui.adsabs.harvard.edu/abs/2014RvMP...86..419B},
  adsnote      = {Provided by the SAO/NASA Astrophysics Data System}
}

@Article{Fine:1982PRL,
  author    = {Fine, Arthur},
  journal   = {Phys. Rev. Lett.},
  pages     = {291--295},
  title     = {Hidden Variables, Joint Probability, and the {B}ell Inequalities},
  volume    = {48},
  year      = {1982},
  doi       = {10.1103/PhysRevLett.48.291},
  publisher = {American Physical Society},
}

@Article{Gleason:1957JMM,
  author  = {Gleason, A. M.},
  journal = {J. Math. Mech.},
  pages   = {885},
  title   = {Measures on the closed subspaces of a {H}ilbert space},
  volume  = {6},
  year    = {1957},
  doi     = {10.1512/iumj.1957.6.56050},
}

@article{Hensen:2015NAT,
  title={Loophole-free {B}ell inequality violation using electron spins separated by 1.3 kilometres},
  author={Hensen, B. and Bernien, H. and Dr{\'e}au, A. E. and Reiserer, A. and Kalb, N. and Blok, M. S. and Ruitenberg, J. and Vermeulen, R. F. L. and Schouten, R. N. and Abell{\'a}n, C. and others},
  journal={Nature},
  volume={526},
  number={7575},
  pages={682--686},
  year={2015},
  doi={10.1038/nature15759},
}

@ARTICLE{Giustina:2015PRL,
       author = {{Giustina}, Marissa and {Versteegh}, Marijn A.~M. and {Wengerowsky}, S{\"o}ren and {Handsteiner}, Johannes and {Hochrainer}, Armin and {Phelan}, Kevin and {Steinlechner}, Fabian and {Kofler}, Johannes and {Larsson}, Jan-{\r{A}}ke and {Abell{\'a}n}, Carlos and {Amaya}, Waldimar and {Pruneri}, Valerio and {Mitchell}, Morgan W. and {Beyer}, J{\"o}rn and {Gerrits}, Thomas and {Lita}, Adriana E. and {Shalm}, Lynden K. and {Nam}, Sae Woo and {Scheidl}, Thomas and {Ursin}, Rupert and {Wittmann}, Bernhard and {Zeilinger}, Anton},
        title = "{Significant-Loophole-Free Test of Bell's Theorem with Entangled Photons}",
      journal = {Phys. Rev. Lett.},
     keywords = {03.65.Ud, 42.50.Xa, Entanglement and quantum nonlocality, Optical tests of quantum theory, Quantum Physics},
         year = 2015,
        month = dec,
       volume = {115},
       number = {25},
          eid = {250401},
        pages = {250401},
          doi = {10.1103/PhysRevLett.115.250401},
archivePrefix = {arXiv},
       eprint = {1511.03190},
 primaryClass = {quant-ph},
       adsurl = {https://ui.adsabs.harvard.edu/abs/2015PhRvL.115y0401G},
      adsnote = {Provided by the SAO/NASA Astrophysics Data System}
}

@ARTICLE{Shalm:2015PRL,
       author = {{Shalm}, Lynden K. and {Meyer-Scott}, Evan and {Christensen}, Bradley G. and {Bierhorst}, Peter and {Wayne}, Michael A. and {Stevens}, Martin J. and {Gerrits}, Thomas and {Glancy}, Scott and {Hamel}, Deny R. and {Allman}, Michael S. and {Coakley}, Kevin J. and {Dyer}, Shellee D. and {Hodge}, Carson and {Lita}, Adriana E. and {Verma}, Varun B. and {Lambrocco}, Camilla and {Tortorici}, Edward and {Migdall}, Alan L. and {Zhang}, Yanbao and {Kumor}, Daniel R. and {Farr}, William H. and {Marsili}, Francesco and {Shaw}, Matthew D. and {Stern}, Jeffrey A. and {Abell{\'a}n}, Carlos and {Amaya}, Waldimar and {Pruneri}, Valerio and {Jennewein}, Thomas and {Mitchell}, Morgan W. and {Kwiat}, Paul G. and {Bienfang}, Joshua C. and {Mirin}, Richard P. and {Knill}, Emanuel and {Nam}, Sae Woo},
        title = "{Strong Loophole-Free Test of Local Realism$^{*}$}",
      journal = {Phys. Rev. Lett.},
     keywords = {03.65.Ud, 42.50.Xa, 42.65.Lm, Entanglement and quantum nonlocality, Optical tests of quantum theory, Parametric down conversion and production of entangled photons, Quantum Physics},
         year = 2015,
        month = dec,
       volume = {115},
       number = {25},
          eid = {250402},
        pages = {250402},
          doi = {10.1103/PhysRevLett.115.250402},
archivePrefix = {arXiv},
       eprint = {1511.03189},
 primaryClass = {quant-ph},
       adsurl = {https://ui.adsabs.harvard.edu/abs/2015PhRvL.115y0402S},
      adsnote = {Provided by the SAO/NASA Astrophysics Data System}
}

@ARTICLE{Popescu:1994FPH,
       author = {{Rohrlich}, D. and {Popescu}, S.},
        title = "{Nonlocality as an axiom for quantum theory}",
      journal = {arXiv e-prints},
     keywords = {Quantum Physics},
         year = 1995,
        month = aug,
          eid = {quant-ph/9508009},
        pages = {quant-ph/9508009},
          doi = {10.48550/arXiv.quant-ph/9508009},
}

@article{bell_einstein_1964,
	title = {On the {Einstein} {Podolsky} {Rosen} paradox},
	volume = {1},
	issn = {0554-128X},
	url = {https://link.aps.org/doi/10.1103/PhysicsPhysiqueFizika.1.195},
	doi = {10.1103/PhysicsPhysiqueFizika.1.195},
	number = {3},
	urldate = {2023-10-15},
	journal = {Physics Physique Fizika},
	author = {Bell, J. S.},
	month = nov,
	year = {1964},
	pages = {195--200},
}

@ARTICLE{brask_bell_2017,
       author = {{Brask}, J.~B. and {Chaves}, R.},
        title = "{Bell scenarios with communication}",
      journal = {Journal of Physics A Mathematical General},
     keywords = {Quantum Physics},
         year = 2017,
        month = mar,
       volume = {50},
       number = {9},
          eid = {094001},
        pages = {094001},
          doi = {10.1088/1751-8121/aa5840},
archivePrefix = {arXiv},
       eprint = {1607.08182},
 primaryClass = {quant-ph},
       adsurl = {https://ui.adsabs.harvard.edu/abs/2017JPhA...50i4001B},
      adsnote = {Provided by the SAO/NASA Astrophysics Data System}
}

@ARTICLE{chaves_unifying_2015,
       author = {{Chaves}, R. and {Kueng}, R. and {Brask}, J.~B. and {Gross}, D.},
        title = "{Unifying framework for relaxations of the causal assumptions in Bell's theorem}",
      journal = {Phys. Rev. Lett.},
     keywords = {03.65.Ud, 03.67.-a, Entanglement and quantum nonlocality, Quantum information, Quantum Physics, Statistics - Machine Learning},
         year = 2015,
        month = apr,
       volume = {114},
       number = {14},
          eid = {140403},
        pages = {140403},
          doi = {10.1103/PhysRevLett.114.140403},
archivePrefix = {arXiv},
       eprint = {1411.4648},
 primaryClass = {quant-ph},
       adsurl = {https://ui.adsabs.harvard.edu/abs/2015PhRvL.114n0403C},
      adsnote = {Provided by the SAO/NASA Astrophysics Data System}
}

@ARTICLE{hall_relaxed_2011,
       author = {{Hall}, Michael J.~W.},
        title = "{Relaxed Bell inequalities and Kochen-Specker theorems}",
      journal = {Phys. Rev. A},
     keywords = {03.65.Ud, 03.65.Ta, Entanglement and quantum nonlocality, Foundations of quantum mechanics, measurement theory, Quantum Physics},
         year = 2011,
        month = aug,
       volume = {84},
       number = {2},
          eid = {022102},
        pages = {022102},
          doi = {10.1103/PhysRevA.84.022102},
archivePrefix = {arXiv},
       eprint = {1102.4467},
 primaryClass = {quant-ph},
       adsurl = {https://ui.adsabs.harvard.edu/abs/2011PhRvA..84b2102H},
      adsnote = {Provided by the SAO/NASA Astrophysics Data System}
}

@ARTICLE{tavakoli2021correlations,
       author = {{Tavakoli}, Armin and {Pauwels}, Jef and {Woodhead}, Erik and {Pironio}, Stefano},
        title = "{Correlations in Entanglement-Assisted Prepare-and-Measure Scenarios}",
      journal = {PRX Quantum},
     keywords = {Quantum Physics},
         year = 2021,
        month = dec,
       volume = {2},
       number = {4},
          eid = {040357},
        pages = {040357},
          doi = {10.1103/PRXQuantum.2.040357},
archivePrefix = {arXiv},
       eprint = {2103.10748},
 primaryClass = {quant-ph},
       adsurl = {https://ui.adsabs.harvard.edu/abs/2021PRXQ....2d0357T},
      adsnote = {Provided by the SAO/NASA Astrophysics Data System}
}

@ARTICLE{pauwels2021entanglement,
       author = {{Pauwels}, Jef and {Tavakoli}, Armin and {Woodhead}, Erik and {Pironio}, Stefano},
        title = "{Entanglement in prepare-and-measure scenarios: many questions, a few answers}",
      journal = {New Journal of Physics},
     keywords = {entanglement, correlations, quantum communication, Quantum Physics},
         year = 2022,
        month = jun,
       volume = {24},
       number = {6},
          eid = {063015},
        pages = {063015},
          doi = {10.1088/1367-2630/ac724a},
archivePrefix = {arXiv},
       eprint = {2108.00442},
 primaryClass = {quant-ph},
       adsurl = {https://ui.adsabs.harvard.edu/abs/2022NJPh...24f3015P},
      adsnote = {Provided by the SAO/NASA Astrophysics Data System}
}

@ARTICLE{Moreno_Semi_device_2021,
       author = {{Moreno}, George and {Nery}, Ranieri and {de Gois}, Carlos and {Rabelo}, Rafael and {Chaves}, Rafael},
        title = "{Semi-device-independent certification of entanglement in superdense coding}",
      journal = {Phys. Rev. A},
     keywords = {Quantum Physics},
         year = 2021,
        month = feb,
       volume = {103},
       number = {2},
          eid = {022426},
        pages = {022426},
          doi = {10.1103/PhysRevA.103.022426},
archivePrefix = {arXiv},
       eprint = {2102.02709},
 primaryClass = {quant-ph},
       adsurl = {https://ui.adsabs.harvard.edu/abs/2021PhRvA.103b2426M},
      adsnote = {Provided by the SAO/NASA Astrophysics Data System}
}

@ARTICLE{CHSH-inequality,
       author = {{Clauser}, John F. and {Horne}, Michael A. and {Shimony}, Abner and {Holt}, Richard A.},
        title = "{Proposed Experiment to Test Local Hidden-Variable Theories}",
      journal = {Phys. Rev. Lett.},
         year = 1969,
        month = oct,
       volume = {23},
       number = {15},
        pages = {880-884},
          doi = {10.1103/PhysRevLett.23.880},
       adsurl = {https://ui.adsabs.harvard.edu/abs/1969PhRvL..23..880C},
      adsnote = {Provided by the SAO/NASA Astrophysics Data System}
}

@ARTICLE{Frenkel2022entanglement,
       author = {{Frenkel}, P{\'e}ter E. and {Weiner}, Mih{\'a}ly},
        title = "{On entanglement assistance to a noiseless classical channel}",
      journal = {Quantum},
     keywords = {Quantum Physics, Computer Science - Information Theory, Mathematical Physics},
         year = 2022,
        month = mar,
       volume = {6},
        pages = {662},
          doi = {10.22331/q-2022-03-01-662},
archivePrefix = {arXiv},
       eprint = {2103.08567},
 primaryClass = {quant-ph},
       adsurl = {https://ui.adsabs.harvard.edu/abs/2022Quant...6..662F},
      adsnote = {Provided by the SAO/NASA Astrophysics Data System}
}

@ARTICLE{acin2006grothendieck,
       author = {{Ac{\'\i}n}, Antonio and {Gisin}, Nicolas and {Toner}, Benjamin},
        title = "{Grothendieck's constant and local models for noisy entangled quantum states}",
      journal = {Phys. Rev. A},
     keywords = {03.65.Ud, 03.67.Dd, 03.67.Mn, Entanglement and quantum nonlocality, Quantum cryptography, Entanglement production characterization and manipulation, Quantum Physics},
         year = 2006,
        month = jun,
       volume = {73},
       number = {6},
          eid = {062105},
        pages = {062105},
          doi = {10.1103/PhysRevA.73.062105},
archivePrefix = {arXiv},
       eprint = {quant-ph/0606138},
 primaryClass = {quant-ph},
       adsurl = {https://ui.adsabs.harvard.edu/abs/2006PhRvA..73f2105A},
      adsnote = {Provided by the SAO/NASA Astrophysics Data System}
}

@ARTICLE{ringbauer_experimental_2016,
       author = {{Ringbauer}, M. and {Giarmatzi}, C. and {Chaves}, R. and {Costa}, F. and {White}, A.~G. and {Fedrizzi}, A.},
        title = "{Experimental test of nonlocal causality}",
      journal = {Science Advances},
     keywords = {Quantum Physics},
         year = 2016,
        month = aug,
       volume = {2},
       number = {8},
        pages = {e1600162-e1600162},
          doi = {10.1126/sciadv.1600162},
archivePrefix = {arXiv},
       eprint = {1602.02767},
 primaryClass = {quant-ph},
       adsurl = {https://ui.adsabs.harvard.edu/abs/2016SciA....2E0162R},
      adsnote = {Provided by the SAO/NASA Astrophysics Data System}
}

@ARTICLE{vieira2024test,
       author = {{Vieira}, Carlos and {Ramanathan}, Ravishankar and {Cabello}, Ad{\'a}n},
        title = "{Test of the physical significance of Bell non-locality}",
      journal = {Nature Communications},
     keywords = {Quantum Physics},
         year = 2025,
        month = may,
       volume = {16},
       number = {1},
          eid = {4390},
        pages = {4390},
          doi = {10.1038/s41467-025-59247-7},
archivePrefix = {arXiv},
       eprint = {2402.00801},
 primaryClass = {quant-ph},
       adsurl = {https://ui.adsabs.harvard.edu/abs/2025NatCo..16.4390V},
      adsnote = {Provided by the SAO/NASA Astrophysics Data System}
}

@ARTICLE{RennerCompatibility2024,
       author = {{Renner}, Martin J.},
        title = "{Compatibility of Generalized Noisy Qubit Measurements}",
      journal = {Phys. Rev. Lett.},
     keywords = {Quantum Physics},
         year = 2024,
        month = jun,
       volume = {132},
       number = {25},
          eid = {250202},
        pages = {250202},
          doi = {10.1103/PhysRevLett.132.250202},
archivePrefix = {arXiv},
       eprint = {2309.12290},
 primaryClass = {quant-ph},
       adsurl = {https://ui.adsabs.harvard.edu/abs/2024PhRvL.132y0202R},
      adsnote = {Provided by the SAO/NASA Astrophysics Data System}
}

@article{divianszky_certification_2023,
	title = {Certification of qubits in the prepare-and-measure scenario with large input alphabet and connections with the {Grothendieck} constant},
	volume = {13},
	issn = {2045-2322},
	doi = {10.1038/s41598-023-39529-0},
	number = {1},
	urldate = {2024-11-26},
	journal = {Scientific Reports},
	author = {Diviánszky, Péter and Márton, István and Bene, Erika and Vértesi, Tamás},
	month = aug,
	year = {2023},
	pages = {13200},
archivePrefix = {arXiv},
       eprint = {2211.17185},
 primaryClass = {quant-ph}
}

@article{Designolle2024Symmetric,
  title = {Symmetric multipartite Bell inequalities via Frank-Wolfe algorithms},
  author = {Designolle, S\'ebastien and V\'ertesi, Tam\'as and Pokutta, Sebastian},
  journal = {Phys. Rev. A},
  volume = {109},
  issue = {2},
  pages = {022205},
  numpages = {11},
  year = {2024},
  month = {Feb},
  publisher = {American Physical Society},
  doi = {10.1103/PhysRevA.109.022205},
  url = {https://link.aps.org/doi/10.1103/PhysRevA.109.022205}
}

@misc{besancon2025improved,
      title={Improved algorithms and novel applications of the {FrankWolfe.jl} library}, 
      author={Mathieu Besançon and Sébastien Designolle and Jannis Halbey and Deborah Hendrych and Dominik Kuzinowicz and Sebastian Pokutta and Hannah Troppens and Daniel Viladrich Herrmannsdoerfer and Elias Wirth},
      year={2025},
      eprint={2501.14613},
      archivePrefix={arXiv},
      primaryClass={math.OC},
      url={https://arxiv.org/abs/2501.14613}, 
}

@misc{bellpolytopes-lhvout,
author ={},
  note         = {\url{https://github.com/ZIB-IOL/BellPolytopes.jl/tree/lhv+out}},
year = {}
}

@misc{zenodo_sm,
  author       = {Vieira, Carlos and
                  de Gois, Carlos and
                  Lauand, Pedro and
                  Porto, Lucas E. A. and
                  Designolle, Sébastien and
                  Quintino, Marco Túlio},
  title        = {Model for "Can outcome communication explain Bell
                   nonlocality?"
                  },
  month        = oct,
  year         = 2025,
  publisher    = {Zenodo},
  doi          = {10.5281/zenodo.17347715},
  url          = {https://doi.org/10.5281/zenodo.17347715},
}

\onecolumn
\appendix
\crefalias{section}{appendix}

\section{Proof of Theorem \ref{thm:LHV+Out_Eq_LHV_Behaviours}}\label{sec: appendix proof Theorem}
\resthm*
\begin{proof}
        Without loss of generality, assume that $p_A(1|x') = 1$.
        The case $p_A(-1|x') = 1$ is analogous.

        Since $p(ab|xy)$ admits an LHV+Out model, we have
        \begin{equation}\label{eq:LHV+OutModelProof}
             p(ab|xy) = \sum_\lambda p(\lambda) p_{A}(a|x \lambda) p_{B}(b|ay \lambda), \,\forall a,b,x,y .
        \end{equation}
        We claim that $p(ab|xy)$ admits the following LHV model:
        \begin{equation}\label{eq:LHVModelProof}
             p(ab|xy) = \sum_\lambda p(\lambda) p_{A}(a|x \lambda) \tilde{p}_{B}(b|y \lambda), \,\forall a,b,x,y .
        \end{equation}
        where $p(\lambda)$ and $p_A(a|x\lambda)$ are the same as in \cref{eq:LHV+OutModelProof}, and $\tilde{p}_{B}(b|y \lambda) :=  p_{B}(b|1y \lambda)$ $\forall b, y, \lambda$.
        For $a=1$, \cref{eq:LHV+OutModelProof} directly implies \cref{eq:LHVModelProof}.
        Thus, it remains to verify the case $a=-1$.
        Marginalising \cref{eq:LHV+OutModelProof} over Bob's outcomes, and setting $a=1$, $x=x'$, we obtain
        \begin{equation}
            1 = p_{A}(1|x') = \sum_\lambda p(\lambda) p_{A}(1|x' \lambda).
        \end{equation}
        Hence, for all $\lambda$, we must have $p_A(1|x'\lambda) = 1$ and consequently $p_A(-1|x'\lambda) = 0$.
        
        By the nonsignalling property,
        \begin{equation}\label{eq:NonSigConditionProof}
            p(-1b|xy) = p_{B}(b|y) - p(1b|xy).
        \end{equation}
        On the other hand, marginalising \cref{eq:LHV+OutModelProof} over Alice's outcomes at input $x'$, we find
        \begin{align}\label{eq:p_B_Model}
             p_{B}(b|y) &:= \sum_{a}p(ab|x'y) = \sum_{a,\lambda} p(\lambda) p_{A}(a|x' \lambda) p_{B}(b|ay \lambda) \nonumber\\
             & = \sum_{\lambda} p(\lambda) p_{B}(b|1y \lambda).
        \end{align}
        Combining \cref{eq:NonSigConditionProof,eq:LHV+OutModelProof,eq:p_B_Model}, we have
        \begin{align}
            p(-1b|xy) &= p_{B}(b|y) - p(1b|xy) \nonumber\\
            &= \sum_{\lambda} p(\lambda) \left[ 1 - p_{A}(1|x\lambda) \right]p_{B}(b|1y \lambda) \nonumber\\
            &= \sum_{\lambda} p(\lambda) p_{A}(-1|x\lambda)\tilde{p}_{B}(b|y \lambda).
        \end{align}
        which matches the model in \cref{eq:LHVModelProof} for $a=-1$.
    \end{proof}

\section{Proof of Result \ref{res:werner-outlhv}}
\label{app:proof-result-2}

    \werneroutlhv*
    While Result 2 was stated for the specific case of Werner states, the next theorem establishes that the same equivalence remains valid in a broader context. It applies to all two-qubit states with maximally mixed marginals, namely the Bell-diagonal family.
Result 2 follows from the following theorem.
    \begin{theorem}
        Let $\rho\in\mathcal{L}(\mathbb{C}_{2} \otimes\mathbb{C}_{2})$ be a two-qubit state with uniform partial states, that is, $\tr_A(\rho)= \tr_B(\rho)=\frac{\eye}{2}$.
         Under rank-1 projective measurements, $\rho$ admits an LHV model if and only if it admits an LHV+Out model.
    \end{theorem}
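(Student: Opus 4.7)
The ``only if'' direction is immediate, since any LHV decomposition is also an LHV+Out decomposition in which Bob ignores Alice's message. The content is the converse, which I sketch below.

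Assume $\rho\in\mathcal L(\mathbb C^2\otimes\mathbb C^2)$ with $\tr_A\rho=\tr_B\rho=\eye/2$ admits an LHV+Out decomposition
$$p(ab|\vec n,\vec m)=\sum_\lambda p(\lambda)\,p_A(a|\vec n,\lambda)\,p_B(b|a,\vec m,\lambda)$$
for all rank-1 projective measurements $A_a(\vec n)=(\eye+a\,\vec n\cdot\vec\sigma)/2$ and $B_b(\vec m)=(\eye+b\,\vec m\cdot\vec\sigma)/2$, where $\vec n,\vec m\in S^2$ and $a,b\in\{\pm1\}$. Two structural facts will be essential. First, the uniform-marginal assumption forces $p_A(a|\vec n)=p_B(b|\vec m)=1/2$ for every $\vec n,\vec m$. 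Second, since $A_a(\vec n)=A_{-a}(-\vec n)$ (and analogously for Bob), the behaviour is invariant under the two outcome-relabeling symmetries $p(ab|\vec n,\vec m)=p(-a,b|-\vec n,\vec m)=p(a,-b|\vec n,-\vec m)$.

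\textbf{Step 1 (symmetrization).} I would first promote the model to a $\mathbb Z_2\times\mathbb Z_2$-symmetric LHV+Out model by enlarging the hidden variable to $(\lambda,\epsilon,\eta)\in\Lambda\times\{\pm\}^2$ with the product measure $p(\lambda)/4$ and setting
\begin{align*}
p_A'(a|\vec n,\lambda,\epsilon,\eta)&=p_A(\epsilon a\,|\,\epsilon\vec n,\lambda),\\
p_B'(b|a,\vec m,\lambda,\epsilon,\eta)&=p_B(\eta b\,|\,\epsilon a,\eta\vec m,\lambda).
\end{align*}
A direct calculation using the four relabeling identities shows that this new model still reproduces $p(ab|\vec n,\vec m)$, and by construction its response functions are covariant under the two independent actions $(\vec n,a,\epsilon)\mapsto(-\vec n,-a,-\epsilon)$ and $(\vec m,b,\eta)\mapsto(-\vec m,-b,-\eta)$.

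\textbf{Step 2 (reduction to LHV).} I would then adapt Lemma~II.1 of Ref.~\cite{divianszky_certification_2023}, which classifies LHV models for two-qubit states with maximally mixed marginals under rank-1 projective measurements via a canonical form compatible with exactly this $\mathbb Z_2\times\mathbb Z_2$ symmetry. Running the same classification on the symmetrized LHV+Out model from Step~1, the uniform-marginal condition together with the covariance forces the dependence of $p_B'$ on Alice's communicated outcome $a$ to be absorbable into the auxiliary bit $\epsilon$. After this absorption, i.e.\ after a final relabelling of the hidden variable that pairs $\epsilon$ with the ``virtual'' Alice outcome, Bob's response no longer depends on $a$, and one is left with a valid LHV model for $\rho$. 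Specializing to $\rho=W(v)$ recovers Result~\ref{res:werner-outlhv}.

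\textbf{Main obstacle.} The heart of the proof is Step~2. A naive averaging $p_B'(b|\vec m,\cdot)\leftarrow\tfrac12\sum_a p_B'(b|a,\vec m,\cdot)$ fails: it alters the behaviour by a term proportional to $\sum_\lambda p(\lambda)\bigl[p_A(+|\vec n,\lambda)-\tfrac12\bigr]\bigl[p_B(b|+,\vec m,\lambda)-p_B(b|-,\vec m,\lambda)\bigr]$, which need not vanish even though the overall Alice marginal is uniform. The Divianszky construction eliminates this obstruction by a careful pairing of hidden-variable branches that is consistent with the full $\mathbb Z_2\times\mathbb Z_2$ covariance and with the continuous family of measurement directions simultaneously; adapting this pairing to a setting where Bob already uses $a$ classically, while still preserving the behaviour on every $(\vec n,\vec m)\in S^2\times S^2$, is where the restriction to two qubits and the bilinear structure of the correlator for maximally-mixed-marginal states enters decisively.
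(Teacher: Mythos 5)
Your Step 1 is fine: the $\mathbb{Z}_2\times\mathbb{Z}_2$ symmetrisation yields a valid LHV+Out model reproducing the same behaviour. But the theorem lives entirely in your Step 2, and Step 2 is not a proof---it is a statement of intent. You correctly identify that the naive averaging over $a$ changes the behaviour by the cross term $\sum_\lambda p(\lambda)\bigl[p_A(+|\vec n,\lambda)-\tfrac12\bigr]\bigl[p_B(b|+,\vec m,\lambda)-p_B(b|-,\vec m,\lambda)\bigr]$, and you then assert that the covariance forces the $a$-dependence of $p_B'$ to be ``absorbable'' into the auxiliary bit $\epsilon$, without exhibiting the absorption or explaining why the obstruction vanishes after the pairing. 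That is precisely the hard part, and nothing in your write-up establishes it. Moreover, your appeal to Lemma~II.1 of Ref.~\cite{divianszky_certification_2023} miscasts that result: it is not a classification of LHV models via a canonical form, but the combinatorial identity $L_2(M')=L(M')=2L(M)$ for the coefficient matrix $M'$ obtained by stacking $M$ on top of $-M$ in a full-correlation Bell inequality. There is no ``same classification'' to run on your symmetrised model.

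The paper sidesteps the model-transformation problem entirely by arguing dually, at the level of inequalities. Since the marginals vanish, any Bell violation of $\rho$ is witnessed by a full-correlation inequality $\sum_{x,y}M_{xy}\langle A_xB_y\rangle\le L(M)$. One computes that the LHV+Out bound of such an inequality equals the prepare-and-measure bound $L_2(M)$ of Ref.~\cite{divianszky_certification_2023}, and Lemma~II.1 there gives $L_{\mathrm{Out}}(M')=L_2(M')=L(M')=2L(M)$ for the doubled matrix $M'$. Appending the antipodal observables $\tilde A_{m+x}=-A_x$ (available because the set of rank-1 projective measurements is closed under antipodes) turns a violation $Q_\rho(M)>L(M)$ into a violation $2Q_\rho(M)>2L(M)$ of an inequality whose LHV and LHV+Out bounds coincide, so $\rho$ is not LHV+Out. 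If you wish to salvage your constructive route, you must actually produce a hidden-variable pairing that kills the cross term for every pair $(\vec n,\vec m)$ simultaneously; as written, your argument has a gap exactly where the content of the theorem lies.
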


    \begin{proof}[Proof]
The statement is proven by showing that if $\rho$ violates a Bell inequality under rank-1 projective measurements, then it also violates an LHV+Out inequality (the converse follows from \cref{def:lhv-out-model}).
First, recall that any two-qubit state violating a Bell inequality under projective measurements also violates a dichotomic Bell inequality (all extremal dichotomic measurements are projective~\cite{dariano2005extremal}). Moreover, since $\rho$ has uniform marginals, we have $\ev{A_x}_{\rho} = \ev{B_y}_{\rho} = 0$ for any observables $A_x = A_{0|x}-A_{1|x}$ and $B_y = B_{0|y}-B_{1|y}$, where $\set{A_{a|x}}, \set{B_{b|y}}$ are rank-1 projective measurements. Hence, any violation can be expressed through a full-correlator Bell inequality of the form
\begin{equation}
    \mathcal{O}_{\text{LHV}} = \sum_{x=1}^{m} \sum_{y=1}^{n} M_{xy}\,\ev{A_x B_y} \leq L(M),
    \label{eq:fc-bell-inequality-proofsketch}
\end{equation}
where $M=\set{M_{xy}}$ is a real coefficient matrix.
The local (LHV) bound is
\begin{equation}
    L(M) \coloneqq \max_{\substack{\vec a \in \{-1,1\}^m \\ \vec b \in \{-1,1\}^n}}
    \sum_{x=1}^m \sum_{y=1}^n M_{xy}\,a_x b_y.
    \label{eq:Bound_FCLHV}
\end{equation}

If outcome communication from Alice to Bob is allowed, Bob's deterministic response may depend on both his setting $y$ and Alice's outcome $a_x$. The corresponding LHV+Out bound is
\begin{equation}
    L_{\mathrm{Out}}(M)\coloneqq
    \max_{\substack{\vec a \in \{-1,1\}^m \\ \vec{b}\in\{-1,1\}^{n \times 2}}}
    \sum_{x=1}^m\sum_{y=1}^n M_{xy}\,a_x\,b_{y,a_x}.
    \label{eq:Bound_FCOut}
\end{equation}
We can manipulate the expression of $L_{\text{Out}}(M)$ to get: 
\small
        \begin{align}
          & L_{\text{Out}}(M) \coloneqq  \max_{\substack{\vec{a} \in \{-1,1\}^{n} \\ \vec{b} \in \{-1,1\}^{n \times 2}}} \sum_{x=1}^{m} \sum_{y=1}^{n} M_{xy} a_x  b_{y,a_x}  \nonumber\\
           &= \max_{\substack{\vec{a} \in \{-1,1\}^{n} \\ \vec{b} \in \{-1,1\}^{n \times 2}}} \left[\sum_{y=1}^{n} \sum_{x:a_x = +1} M_{xy} b_{y,+1} - \sum_{y=1}^{m} \sum_{x:a_x = -1} M_{xy} b_{y,-1} \right] \nonumber\\
           & = \max_{\substack{\vec{a} \in \{-1,1\}^{n} \\ \vec{b} \in \{-1,1\}^{n \times 2}}} \left[\sum_{y=1}^{n} b_{y,+1} \sum_{x:a_x = +1} M_{xy} - \sum_{y=1}^{m} b_{y,-1} \sum_{x:a_x = -1} M_{xy} \right] \nonumber\\
           & = \max_{\vec{a} \in \{-1,1\}^{n}} \left[\sum_{y=1}^{n} \left|\sum_{x:a_x = +1} M_{xy} \right| + \sum_{y=1}^{m} \left|\sum_{x:a_x = -1} M_{xy} \right| \right] \nonumber\\
           & = L_{2}(M),
    \end{align}
\normalsize
     where $L_2(M)$ is the classical bound of the associated prepare-and-measure scenario inequality introduced in Ref.~\cite{divianszky_certification_2023}. In particular, Lemma~II.1 therein shows that for the symmetrised matrix
    $$M'=\begin{pmatrix} M \\ -M \end{pmatrix},$$
    one has the identity $ L_2(M') = L(M') = 2L(M)$, thus for this case 
$   L_{\mathrm{Out}}(M') = 2L(M)$.
Therefore, the following Bell expression:
\begin{equation}
    \mathcal{O}_{\text{LHV+Out}} =
    \sum_{x=1}^m \sum_{y=1}^n M_{xy}\ev{A_x B_y}
    -\sum_{x=1}^m \sum_{y=1}^n M_{xy}\ev{A_{m+x} B_{y}} ,
    \label{eq:lhv-out-inequality-proofsketch}
\end{equation}
has LHV and LHV+Out bounds both equal $2L(M)$.

On the other hand, since $\rho$ violates \eqref{eq:fc-bell-inequality-proofsketch}, there exist observables such that
\begin{equation}
   Q_\rho(M)\coloneqq \sum_{x=1}^m\sum_{y=1}^n M_{xy}\,\ev{A_x B_y}_\rho > L(M).
\end{equation}
Defining the extended observables
\begin{equation}
   \tilde A_x = \begin{cases}
       A_x & 1\leq x\leq m, \\
       -A_{x} & m+1\leq x\leq 2m,
   \end{cases}
\end{equation}
we obtain
    \begin{align}
            \mathcal{O}_{\text{LHV+Out}} = \sum_{x=1}^{m} & \sum_{y=1}^{n} M_{xy} \ev{ \tilde{A}_x B_y }_{\rho} \nonumber\\
            & - \sum_{x=m+1}^{2m} \sum_{y=1}^{n} M_{xy} \ev{ \tilde{A}_x B_y }_{\rho} \nonumber\\
            & = 2\sum_{x=1}^{m}  \sum_{y=1}^{n} M_{xy} \ev{ A_x B_y }_{\rho} \nonumber \\
            &= 2Q_{\rho}(M) > 2L(M),
    \end{align}
  therefore violating the LHV+Out bound. In conclusion, if $\rho$ violates any Bell inequality [\cref{eq:fc-bell-inequality-proofsketch}], one can construct an LHV+Out inequality [\cref{eq:lhv-out-inequality-proofsketch}] that is also violated by $\rho$.
\end{proof}

\section{Proof of Result \ref{res:lhvout_dif_lhv}}
\label{app:proof-result-3}
To construct LHV+Out models for quantum states, we extend the computational methods originally developed for LHV models \cite{cavalcanti2016general, hirsch2016algorithmic, hirsch2017betterlocalhidden, gois2021general}. Before showing details of the model itself, let us outline the general argument behind the method.

Ultimately, we wish to make a statement about the behaviour of a state under all projective measurements in a hemisphere.
Because that is a continuous set, the core idea is to first work with an inner approximation of the set of quantum measurements in that hemisphere, composed of a finite but large number of measurements, then certify the corresponding behaviour is inside LHV+Out polytope, and finally extend it to a model for the entire convex hull of the original measurements.

Recent advances in conditional gradient algorithms \cite{BCC+22,besancon2022frankwolfe} and an efficient implementation have enabled the numerical search for LHV models for behaviours with up to hundreds of measurements \cite{DIB+23}.
We extended this method to the LHV+Out scenario \cite{bellpolytopes-lhvout}, the central modification being to rewrite Bob’s response functions so that they can explicitly depend on Alice’s outcome. Moreover, the efficiency of the implementation was also greatly improved by employing the symmetrisation techniques described in Ref.~\cite{Designolle2024Symmetric, besancon2025improved}. Importantly, models obtained with this method are self-contained, and can be verified independently of the algorithm that searches for the model (see \cref{app:numerics-verification}).

This method was then applied starting from a set of $401$ measurements per party, all restricted to the upper hemisphere of the Bloch sphere (i.e., the measurement Bloch vectors $\vec{u}_x, \vec{u}_y$ all lying on the upper hemisphere, for all measurement settings $x$ and $y$).
For this configuration, we found a model that very closely approximates the statistics of the Werner state with visibility $v = 0.7071$. Indeed, the Euclidean distance between the model’s statistics and the quantum ones was $\epsilon \approx 2 \times 10^{-4}$.
This model can be made exact by slightly reducing the visibility to $v = \tfrac{1}{1+\epsilon}\cdot 0.7071 \approx 0.70695$ (see Lemma I in Ref.~\cite{DIB+23}). The initial measurement set and the resulting model are available in an online repository \cite{zenodo_sm}.

Lastly, we must still argue that the model remains valid beyond the behaviour of $W(v)$ under the finite choice of measurements considered.
A well known method in the LHV case is to take the convex hull of the measurement vectors for each party, compute the radii $\eta_A$ and $\eta_B$ of the largest spheres that can be inscribed in each convex hull (also known as the \emph{shrinking factors}), then observe that the state \( W( \eta_A \eta_B v)\) admits the model for all rank-1 measurements \cite{cavalcanti2016general, hirsch2016algorithmic}. This argument extends directly to the LHV+Out case.
To enlarge the inscribed radius, a common trick in the LHV case is to double the measurement set by including all antipodal points on the Bloch sphere, that is, given a measurement $\vec{u}$, one also adds its antipode $-\vec{u}$. It is straightforward to show that if an LHV model exists for the sets $\set{u_x}_x$ and $\set{u_y}_y$, then it also exists for the enlarged sets $\set{u_x}_x \cup \set{-u_x}_x$ and $\set{u_y}_y \cup \set{-u_y}_y$.
However, it is crucial to notice that in the LHV+Out scenario this doubling trick does not apply symmetrically:
while it can be implemented for Bob’s measurements, it cannot be applied to Alice’s side: since Alice communicates her input to Bob, this restricts the pre-processing she can perform without altering Bob’s strategy.

Our initial $401$ measurements were chosen so that their convex hull provides a good approximation to the upper hemisphere of the Bloch sphere \cite{zenodo_sm}.
Indeed, the convex hull of this set contains a half-sphere of radius $\cos(\pi/40)^2 \approx 0.99384$. By further applying the doubling trick only to Bob’s measurements and applying the shrinking factors, we conclude that the Werner state $W(v)$ admits an LHV+Out model for all $v \leq  0.70695 \cos(\pi/40)^4 \approx 0.69828$, valid for arbitrary projective measurements on Bob’s side and for projective measurements restricted to the upper hemisphere on Alice’s side. Since $W(v_{\mathrm{NL}})$ is known to be nonlocal for any $v_{\mathrm{NL}} \geq 0.69604$ \cite{designolle2024better}, this establishes \cref{res:lhvout_dif_lhv}.

In the remainder of this section we present and validate the obtained LHV+Out model in more detail.

\subsection{Correlator representation}
\label{app:correlator-notation}

\noindent
We start by constructing a representation of LHV+Out behaviours in terms of correlators. This representation is more economical and ergonomic than the full probability representation, and will later simplify the construction of the model. 
The following proposition shows how to define an LHV+Out behaviour in terms of the correlators representation.

\begin{restatable}{proposition}{corrprop}\label{prop:LHV+Out-correlators}
    A nonsignalling behaviour $p(ab|xy)$ with binary outcomes has an LHV+Out model if, and only if, there exists a probability distribution $p(\lambda)$ and correlators $\set{\langle a_{x}^{\lambda}\rangle}_{x, \lambda}, \set{\langle b_{y,1}^{\lambda}\rangle}_{y, \lambda}, \set{\langle b_{y,-1}^{\lambda}\rangle}_{y, \lambda} \subseteq \set{\pm1}$ such that
    \begin{subequations}\label{eq:CorrelatorsOutLHV}
        \begin{align}
            &\langle a_x \rangle = \sum_\lambda p(\lambda) \langle a_{x}^{\lambda} \rangle \label{eq:CorrelatorsOutLHV-alice}\\
             &\langle b_y \rangle = \sum_\lambda p(\lambda) \langle b_{y, \langle a_{x}^{\lambda} \rangle}^{\lambda} \rangle,  \quad \forall x  \label{eq:CorrelatorsOutLHV-bob}\\
            & \langle a_x b_y \rangle = \sum_\lambda p(\lambda) \langle a_{x}^{\lambda} \rangle \langle b_{y, \langle a_{x}^{\lambda} \rangle}^{\lambda} \rangle .\label{eq:CorrelatorsOutLHV-joint}
        \end{align}
    \end{subequations}
\end{restatable}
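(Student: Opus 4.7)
The plan is to exploit the polytope characterisation of LHV+Out behaviours given in \cref{eq:out-lhv-deterministic-strategies-deltas}, which states that any LHV+Out behaviour is a convex combination of deterministic strategies. For binary outcomes, a deterministic distribution $D_A(\,\cdot\,|x\lambda)$ is equivalent to selecting a single value $a_x^\lambda \in \set{\pm 1}$ via $D_A(a|x\lambda) = \tfrac{1+a a_x^\lambda}{2}$, and analogously $D_B(b|ay\lambda) = \tfrac{1+b\,b_{y,a}^\lambda}{2}$ for some $b_{y,a}^\lambda \in \set{\pm 1}$. The identification between a $\pm 1$-valued response and its expectation $\langle a_x^\lambda \rangle,\langle b_{y,a}^\lambda \rangle$ is the bridge between the two representations.

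For the forward direction, I would start from the decomposition in \cref{eq:out-lhv-deterministic-strategies-deltas} and directly compute the three correlators by marginalising. The Alice marginal gives \cref{eq:CorrelatorsOutLHV-alice} immediately, since Bob's response functions sum to one. The joint correlator $\langle a_xb_y\rangle$ follows by multiplying the $\pm1$-valued deterministic outcomes and summing against $p(\lambda)$, yielding \cref{eq:CorrelatorsOutLHV-joint}. The Bob marginal is more delicate: summing $b \cdot p(ab|xy)$ over $a,b$ produces $\sum_\lambda p(\lambda)\,b_{y,a_x^\lambda}^\lambda$, which a priori depends on $x$ through $a_x^\lambda$. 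Here I invoke the nonsignalling hypothesis on $p(ab|xy)$: the resulting expression must be independent of $x$, and therefore \cref{eq:CorrelatorsOutLHV-bob} holds for every $x$.

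For the backward direction, given a probability distribution $p(\lambda)$ and correlators $\langle a_x^\lambda\rangle, \langle b_{y,a}^\lambda\rangle \in \set{\pm 1}$ satisfying \cref{eq:CorrelatorsOutLHV}, I would define the deterministic response functions
\begin{equation}
    p_A(a|x\lambda) = \tfrac{1+a\langle a_x^\lambda\rangle}{2}, \qquad p_B(b|ay\lambda) = \tfrac{1+b\langle b_{y,a}^\lambda\rangle}{2},
\end{equation}
which are valid since $\langle a_x^\lambda \rangle, \langle b_{y,a}^\lambda \rangle \in \set{\pm 1}$. Using the standard parameterisation of binary-outcome nonsignalling behaviours $p(ab|xy) = \tfrac{1}{4}(1 + a\langle a_x\rangle + b\langle b_y\rangle + ab\langle a_xb_y\rangle)$, I would substitute the three decompositions from \cref{eq:CorrelatorsOutLHV} and expand $\sum_\lambda p(\lambda)p_A(a|x\lambda)p_B(b|ay\lambda)$ term by term, matching each of the four terms $1,a\langle a_x\rangle, b\langle b_y\rangle, ab\langle a_xb_y\rangle$.

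The main obstacle is the cross term involving $b\langle b_y\rangle$. The key identity to verify is $\sum_a p_A(a|x\lambda)\langle b_{y,a}^\lambda\rangle = \langle b_{y,\langle a_x^\lambda\rangle}^\lambda\rangle$, which holds because $p_A(\,\cdot\,|x\lambda)$ is deterministic and picks out $a = \langle a_x^\lambda\rangle$. This reduces the verification to \cref{eq:CorrelatorsOutLHV-bob}, which is precisely the constraint whose $x$-independence encodes the nonsignalling condition. The remaining terms reduce to \cref{eq:CorrelatorsOutLHV-alice,eq:CorrelatorsOutLHV-joint} by routine algebra.
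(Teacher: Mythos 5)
Your proposal is correct and follows essentially the same route as the paper's proof: both directions rest on the deterministic-strategy decomposition, the identification $D(a|x\lambda)=\tfrac{1+a\langle a_x^\lambda\rangle}{2}$, the observation that a deterministic $p_A(\,\cdot\,|x\lambda)$ picks out $a=\langle a_x^\lambda\rangle$ in Bob's response, and the nonsignalling hypothesis to make Bob's marginal $x$-independent. No substantive differences to report.
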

Here, $\langle a_x \rangle$ and $\langle b_y \rangle$ are called the marginal correlators for Alice and Bob, respectively, while $\langle a_x b_y \rangle$ is the two-body correlator.

\begin{proof}
    Let start with the "$\Rightarrow$" direction.
    As $p(ab|xy)$ has an LHV+Out model, then there exists a probability distribution $p(\lambda)$ and deterministic probability distributions $ D_{A}(a|x \lambda)$ and $D_{B}(b|a y \lambda)$, such that
    \begin{equation*}
        p(a b|x y)=\sum_{\lambda} p(\lambda) D_{A}(a|x \lambda) D_{B}(b|a y \lambda).
    \end{equation*}
    where $D_{A}(a|x \lambda)$ and $D_{B}(b|a y \lambda)$ are deterministic.
    Let us define
    \begin{subequations}
    \begin{align}
        &\langle a_{x}^{\lambda} \rangle = \sum_{a} a D_{A}(a|x \lambda);\\
        &\langle b_{y, 1}^{\lambda} \rangle=\sum_{b} b D_{B}(b|1 y \lambda);\\				
        & \langle b_{y, -1}^{\lambda} \rangle = \sum_{b} b D_{B}(b|-1 y \lambda).
    \end{align}
    \end{subequations}
    It follows that $\langle a_{x}^{\lambda} \rangle, \langle b_{y, 1}^{\lambda} \rangle, \langle b_{y, -1}^{\lambda} \rangle \in \set{\pm1}$.
    Moreover, we also have that $D_{A}(a|x \lambda) = \delta(a, \langle a_{x}^{\lambda} \rangle)$, which implies
    \begin{equation}
        D_{A}(a|x \lambda) D_{B}(b|a y \lambda) = D_{A}(a|x \lambda) D_{B}(b|\langle a_{x}^{\lambda} \rangle y \lambda) .
    \end{equation}
    Thus,
    \begin{align}
        &\sum_{\lambda} p(\lambda)\langle a_{x}^{\lambda} \rangle \langle b_{y, \langle a_{x}^{\lambda} \rangle}^{\lambda} \rangle \nonumber\\
        &= \sum_{a,b} ab \sum_{\lambda} p(\lambda) D_{A}(a|x \lambda) D_{B}(b|\langle a_{x}^{\lambda} \rangle y \lambda) \nonumber\\
        & = \sum_{a,b} ab \sum_{\lambda} p(\lambda) D_{A}(a|x \lambda) D_{B}(b|a y \lambda) \nonumber\\
        &= \langle a_{x} b_{y} \rangle.
    \end{align}
    Moreover, it is also true that
    \begin{subequations}
    \begin{align}
     \langle a_{x} \rangle & = \sum_{a} a p(a|x)  =\sum_{\lambda} p(\lambda) \sum_{a} a D_{A}(a|x \lambda)  \\
    &= \sum_{\lambda} p(\lambda) \langle a_{x}^{\lambda} \rangle.
    \end{align}
    \end{subequations}
    Finally, we also have
    \begin{align}
    & \langle b_{y} \rangle = \sum_{b} b p(b|y) = \sum_{a,b} b p(ab|xy) \nonumber\\
    & =\sum_{\lambda} p(\lambda) \sum_{a,b} b D_{A}(a|x \lambda)  D_{B}(b|a y \lambda) \nonumber\\
    & =\sum_{\lambda} p(\lambda) \sum_{a,b} b  D_{A}(a|x \lambda) D_{B}(b|\langle a_{x}^{\lambda} \rangle y \lambda) \nonumber\\
    & = \sum_{\lambda} p(\lambda) \left(\sum_{a}  D_{A}(a|x \lambda) \right) \left( \sum_{b} b D_{B}(b|\langle a_{x}^{\lambda} \rangle y \lambda) \right) \nonumber\\
    & = \sum_{\lambda} p(\lambda) \langle b_{y, \langle a_{x}^{\lambda} \rangle}^{\lambda} \rangle,
    \end{align}
    where the nonsignalling property was used.
    Altogether, \cref{eq:CorrelatorsOutLHV-alice,eq:CorrelatorsOutLHV-bob,eq:CorrelatorsOutLHV-joint} are satisfied.
    
    To prove the "$\Leftarrow$" direction, let us first define:
    \begin{subequations}
    \begin{align}
        &p_{A}(a|x \lambda) = \frac{1 + a\langle a_{x}^{\lambda} \rangle}{2}, \\
        &p_{B}(b|\pm1 y \lambda) = \frac{1 + b \langle b_{y,\pm1}^{\lambda} \rangle}{2} .
    \end{align}
    \end{subequations}
    It follows that $p_{A}(a|x \lambda)$ and $p_{B}(b|\pm1 y \lambda)$ are valid probability distributions.
    Moreover, $a\langle a_{x}^{\lambda} \rangle = 2p_{A}(a|x \lambda) - 1$, and $b\langle b_{y,\pm 1}^{\lambda} \rangle =  2p_{B}(b|\pm1 y \lambda) - 1$.
    Together with \cref{eq:CorrelatorsOutLHV}, these can be used to construct an LHV+Out model for $p(ab|xy)$ by:
    \begin{align}
        &p(ab|xy) = \frac{1}{4} \left[ 1 + a \langle a_{x}  \rangle + b \langle b_{y} \rangle + ab \langle a_{x}b_{y} \rangle \right] \nonumber\\
        & =\frac{1}{4} \sum_{\lambda} p(\lambda) \left[ 1 + a \langle a_{x}^{\lambda} \rangle + b \langle b_{y, \langle a_{x}^{\lambda} \rangle}^{\lambda} \rangle + ab \langle a_{x}^{\lambda} \rangle \langle b_{y, \langle a_{x}^{\lambda} \rangle}^{\lambda} \rangle \right] \nonumber\\
        & = \sum_{\lambda} p(\lambda) p_{A}(a|x \lambda) p_{B}(b|\langle a_{x}^{\lambda} \rangle y \lambda) \nonumber \\
        & = \sum_{\lambda} p(\lambda) p_{A}(a|x \lambda) p_{B}(b|a y \lambda).
    \end{align}
    Thus, the result follows.
\end{proof}

We now specialise the correlator representation to LHV+Out models for Werner states under projective measurements.
Specifically, let Alice and Bob share the Werner state  $W(v) = v \dyad{\psi^-} + (1-v)\tfrac{\eye}{4}$, with $\ket{\psi^-} = (\ket{01} - \ket{10})/\sqrt{2}$, and suppose that each party performs a local projective measurement. Since both parties hold two-dimensional systems, their projective measurements can be represented by sets of normalised Bloch vectors $\set{\vec{u}_x}_{x = 1}^{n_x}$ and $\set{\vec{u}_y}_{y = 1}^{n_y}$, respectively.
Each vector $\vec{u}$ is then associated with a measurement composed of the two effects $\Pi_{\vec{u}, c} = \frac{1}{2}\left(\eye + c \ \vec{u} \cdot \vec{\sigma}\right)$, where \(c \in \set{\pm1}\) and \(\vec{\sigma} = (X, Y, Z)\) denotes the Pauli matrices.
A straightforward calculation shows that, when applied to $W(v)$, these measurements yield the expectation values
\begin{subequations}
\label{eq:lhvoutcorrelators}
\begin{align}
    & \langle a_{x} \rangle \equiv \sum_{a}a \,p_{A}(a|x) = \sum_{a}a \trace\left[ W(v) \cdot (\Pi_{\vec{u}_x, a} \otimes \eye)\right] = 0, \\
    &\langle b_{y} \rangle \equiv \sum_{b}b \, p_{B}(b|y) = \sum_{b}b \trace\left[ W(v) \cdot ( \eye \otimes \Pi_{\vec{u}_y, b})\right] = 0,
    \label{eq:1BodyCorrBob} \\				
    & \langle a_{x} b_{y} \rangle \equiv \sum_{a,b} ab \,p(ab|xy) = \sum_{a,b}ab \trace\left[ W(v) \cdot (\Pi_{\vec{u}_x, a} \otimes  \Pi_{\vec{u}_y, b})\right] = -v  \vec{u}_x \cdot \vec{u}_y,
    \label{eq:2BodyCorr}
\end{align}
\end{subequations}
where the left-hand side of the above equations is given by the LHV+Out decomposition discussed in \cref{prop:LHV+Out-correlators}.

Our implementation of a method for constructing LHV+Out models finds this decomposition in terms of explicit deterministic strategies $\lambda$ and their weights $p(\lambda)$.
Notice that since the model is explicitly constructed, it becomes independent of the method itself and can be verified straightforwardly.
A \textsc{Julia} script to perform this verification, together with the measurements and deterministic strategies, is provided in an online repository \cite{zenodo_sm} and shown below for completeness.

\subsection{Verification of the LHV+Out model}
\label{app:numerics-verification}

\noindent The code reproduced below can be used in \textsc{Julia}, together with the files
\texttt{werner\_nomarg\_sqrt2.dat} and \texttt{HQV20\_v.dat} \cite{zenodo_sm} containing the deterministic strategies and the measurement vectors, respectively, to verify the LHV+Out model for a nonlocal two qubit Werner state
under all projective measurements on a hemisphere of Alice's Bloch ball and all projective measurements for Bob (see \cref{app:proof-result-3}).

    \hypertarget{constructing-the-quantum-behaviour}{%
\subsubsection{Constructing the quantum
behaviour}\label{constructing-the-quantum-behaviour}}

    We choose the same set of \(n_m\) measurements for both Alice and
Bob, represented by the unit vectors \(\vec{u}_x\) where
\(x \in \{1, \ldots, n_m\}\) and correspondingly by \(\vec{u}_y\) for
Bob. A set with \(401\) measurements is provided in the file
\texttt{HQV20\_v.dat}, which we open by executing:

    \begin{tcolorbox}[breakable, size=fbox, boxrule=1pt, pad at break*=1mm,colback=cellbackground, colframe=cellborder]
\prompt{In}{incolor}{2}{\boxspacing}
\begin{Verbatim}[commandchars=\\\{\}]
\PY{n}{meas} \PY{o}{=} \PY{n}{deserialize}\PY{p}{(}\PY{l+s}{\PYZdq{}}\PY{l+s}{./HQV20\PYZus{}v.dat}\PY{l+s}{\PYZdq{}}\PY{p}{)}\PY{p}{;}
\PY{n}{nx} \PY{o}{=} \PY{n}{size}\PY{p}{(}\PY{n}{meas}\PY{p}{,} \PY{l+m+mi}{1}\PY{p}{)}\PY{p}{;} \PY{n}{ny} \PY{o}{=} \PY{n}{nx}\PY{p}{;}
\end{Verbatim}
\end{tcolorbox}

    The measurement's Bloch vectors are represented as a matrix with one vector per row:

    \begin{tcolorbox}[breakable, size=fbox, boxrule=1pt, pad at break*=1mm,colback=cellbackground, colframe=cellborder]
\prompt{In}{incolor}{3}{\boxspacing}
\begin{Verbatim}[commandchars=\\\{\}]
\PY{n}{meas}
\end{Verbatim}
\end{tcolorbox}

    \begin{Verbatim}[commandchars=\\\{\}]
401×3 Matrix\{Float64\}:
 1.0        0.0       0.0
 0.987688   0.156434  0.0
 0.951057   0.309017  0.0
 0.891007   0.45399   0.0
 0.809017   0.587785  0.0
 0.707107   0.707107  0.0
 0.587785   0.809017  0.0
 0.45399    0.891007  0.0
 0.309017   0.951057  0.0
 0.156434   0.987688  0.0
 \vdots                    
 0.309017  -0.951057  0.0
 0.45399   -0.891007  0.0
 0.587785  -0.809017  0.0
 0.707107  -0.707107  0.0
 0.809017  -0.587785  0.0
 0.891007  -0.45399   0.0
 0.951057  -0.309017  0.0
 0.987688  -0.156434  0.0
 0.0        0.0       1.0
    \end{Verbatim}

    To be valid measurements, they must all be normalised, and we also want
them to be on the $z \geq 0$ hemisphere:

    \begin{tcolorbox}[breakable, size=fbox, boxrule=1pt, pad at break*=1mm,colback=cellbackground, colframe=cellborder]
\prompt{In}{incolor}{14}{\boxspacing}
\begin{Verbatim}[commandchars=\\\{\}]
\PY{n+nd}{@assert} \PY{n}{all}\PY{p}{(}\PY{n}{norm}\PY{p}{(}\PY{n}{meas}\PY{p}{[}\PY{n}{i}\PY{p}{,} \PY{o}{:}\PY{p}{]}\PY{p}{)} \PY{o}{==} \PY{l+m+mf}{1.0} \PY{k}{for} \PY{n}{i} \PY{k}{in} \PY{n}{size}\PY{p}{(}\PY{n}{meas}\PY{p}{,} \PY{l+m+mi}{1}\PY{p}{)}\PY{p}{)}
\PY{n+nd}{@assert} \PY{n}{all}\PY{p}{(}\PY{n}{meas}\PY{p}{[}\PY{n}{i}\PY{p}{,} \PY{l+m+mi}{3}\PY{p}{]} \PY{o}{\PYZgt{}=} \PY{l+m+mi}{0.0} \PY{k}{for} \PY{n}{i} \PY{k}{in} \PY{n}{size}\PY{p}{(}\PY{n}{meas}\PY{p}{,} \PY{l+m+mi}{1}\PY{p}{)}\PY{p}{)}
\end{Verbatim}
\end{tcolorbox}

    As discussed in \cref{app:correlator-notation}, on a Werner state \(W(v)\) these measurements
yield the expectation values \begin{equation*}
\langle a_x \rangle = 0, \quad \langle b_y \rangle = 0, \quad \langle a_x b_y \rangle = -v \ \vec{u}_x \cdot \vec{u}_y .
\end{equation*}

Using a visibility \(v = 0.7071\), we construct a matrix with the
correlators \(\langle a_x b_y \rangle\), representing the quantum behaviour of this state under the chosen measurements:

    \begin{tcolorbox}[breakable, size=fbox, boxrule=1pt, pad at break*=1mm,colback=cellbackground, colframe=cellborder]
\prompt{In}{incolor}{5}{\boxspacing}
\begin{Verbatim}[commandchars=\\\{\}]
\PY{n}{vis} \PY{o}{=} \PY{l+m+mf}{0.7071}\PY{p}{;}
\PY{n}{p} \PY{o}{=} \PY{o}{\PYZhy{}}\PY{n}{vis} \PY{o}{.*} \PY{n}{meas} \PY{o}{*} \PY{n}{meas}\PY{o}{\PYZsq{}}\PY{p}{;}
\end{Verbatim}
\end{tcolorbox}
This represents the right-hand side of \cref{eq:2BodyCorr}

\smallskip
\begin{remark}
Since for the Werner state the marginal expectations
vanish, it will suffice to construct an LHV+Out model that reproduces only the
correlator terms \(\langle a_x b_y \rangle\). This holds because any
model that reproduces \(\langle a_x b_y \rangle\) can be extended to a
model with vanishing marginals and the same correlators. Explicitly, one
can duplicate all deterministic strategies and:

\begin{enumerate}
\def\labelenumi{\arabic{enumi}.}
\item
  Divide the weight \(p(\lambda)\) of each strategy by two.
\item
  In the duplicated strategies, flip the signs of all outcomes
  \(\langle a_x^\lambda \rangle\),
  \(\langle b_{y, -1}^\lambda \rangle\), and
  \(\langle b_{y, +1}^\lambda \rangle\).
\end{enumerate}
\end{remark}

    \hypertarget{constructing-the-model}{%
\subsubsection{Constructing a model for the behaviour}\label{constructing-the-model}}

We will now check whether the quantum behaviour \(\boldsymbol{p}\)
defined above has an LHV+Out model. This amounts to finding a
probability distribution \(p(\lambda)\), a set of deterministic
expectations
\(\{ \langle a_{x}^{\lambda} \rangle \}_{x, \lambda} \subseteq \{-1,1\}\)
for Alice, and for Bob the sets
\(\{ \langle b_{y,1}^{\lambda} \rangle \}_{y, \lambda}, \{ \langle b_{y,-1}^{\lambda} \rangle \}_{y, \lambda} \subseteq \{-1,1\}\),
reproducing the quantum behaviour through
\(\langle a_x b_y \rangle = \sum_\lambda p(\lambda) \langle a_{x}^{\lambda} \rangle \langle b_{y, \langle a_{x}^{\lambda} \rangle}^{\lambda} \rangle\), as in the left-hand side of \cref{eq:lhvoutcorrelators}.

\smallskip
The deterministic strategies and weights of our model are stored in the file \texttt{werner\_nomarg\_sqrt2.dat} and
has four components: \texttt{as}, \texttt{bms}, \texttt{bps}, and
\texttt{weights}. These correspond, respectively, to Alice's and Bob's
deterministic assignments (Bob's assignments are split between the cases where he receives $\pm 1$ from Alice), and the probability distribution over the
hidden variables.
    Let us load the model and verify it is valid:

    \begin{tcolorbox}[breakable, size=fbox, boxrule=1pt, pad at break*=1mm,colback=cellbackground, colframe=cellborder]
\prompt{In}{incolor}{11}{\boxspacing}
\begin{Verbatim}[commandchars=\\\{\}]
\PY{n}{model} \PY{o}{=} \PY{n}{deserialize}\PY{p}{(}\PY{l+s}{\PYZdq{}}\PY{l+s}{./werner\PYZus{}nomarg\PYZus{}sqrt2.dat}\PY{l+s}{\PYZdq{}}\PY{p}{)}\PY{p}{;}
\PY{n}{weights} \PY{o}{=} \PY{n}{model}\PY{p}{[}\PY{l+s}{\PYZdq{}}\PY{l+s}{weights}\PY{l+s}{\PYZdq{}}\PY{p}{]}\PY{p}{;}
\PY{n}{as} \PY{o}{=} \PY{n}{model}\PY{p}{[}\PY{l+s}{\PYZdq{}}\PY{l+s}{as}\PY{l+s}{\PYZdq{}}\PY{p}{]}\PY{p}{;}
\PY{n}{bms} \PY{o}{=} \PY{n}{model}\PY{p}{[}\PY{l+s}{\PYZdq{}}\PY{l+s}{bms}\PY{l+s}{\PYZdq{}}\PY{p}{]}\PY{p}{;}
\PY{n}{bps} \PY{o}{=} \PY{n}{model}\PY{p}{[}\PY{l+s}{\PYZdq{}}\PY{l+s}{bps}\PY{l+s}{\PYZdq{}}\PY{p}{]}\PY{p}{;}

\PY{n+nd}{@assert} \PY{n}{all}\PY{p}{(}\PY{n}{w} \PY{o}{\PYZgt{}=} \PY{l+m+mi}{0} \PY{k}{for} \PY{n}{w} \PY{k}{in} \PY{n}{weights}\PY{p}{)} \PY{o}{\PYZam{}\PYZam{}} \PY{n}{isapprox}\PY{p}{(}\PY{n}{sum}\PY{p}{(}\PY{n}{weights}\PY{p}{)}\PY{p}{,} \PY{l+m+mf}{1.0}\PY{p}{)} \PY{c}{\PYZsh{} normalized weights?}
\PY{n+nd}{@assert} \PY{n}{all}\PY{p}{(}\PY{n}{x} \PY{o}{\PYZhy{}\PYZgt{}} \PY{n}{x} \PY{o}{==} \PY{l+m+mi}{0} \PY{o}{||} \PY{n}{x} \PY{o}{==} \PY{l+m+mi}{1}\PY{p}{,} \PY{p}{[}\PY{n}{as}\PY{p}{;} \PY{n}{bms}\PY{p}{;} \PY{n}{bps}\PY{p}{]}\PY{p}{)} \PY{c}{\PYZsh{} all outcomes in \PYZob{}0, 1\PYZcb{}?}
\PY{n+nd}{@assert} \PY{n}{size}\PY{p}{(}\PY{n}{as}\PY{p}{,} \PY{l+m+mi}{2}\PY{p}{)} \PY{o}{==} \PY{n}{nx} \PY{o}{\PYZam{}\PYZam{}} \PY{n}{size}\PY{p}{(}\PY{n}{bms}\PY{p}{,} \PY{l+m+mi}{2}\PY{p}{)} \PY{o}{==} \PY{n}{ny} \PY{c}{\PYZsh{} nof. columns equal to the nof. settings?}
\PY{n+nd}{@assert} \PY{n}{size}\PY{p}{(}\PY{n}{bps}\PY{p}{)} \PY{o}{==} \PY{n}{size}\PY{p}{(}\PY{n}{bms}\PY{p}{)} \PY{c}{\PYZsh{} equal nof. \PYZdq{}+\PYZdq{} and \PYZdq{}\PYZhy{}\PYZdq{} meas. and strategies for Bob}
\PY{n+nd}{@assert} \PY{n}{size}\PY{p}{(}\PY{n}{bps}\PY{p}{,} \PY{l+m+mi}{1}\PY{p}{)} \PY{o}{==} \PY{n}{length}\PY{p}{(}\PY{n}{weights}\PY{p}{)} \PY{o}{\PYZam{}\PYZam{}} \PY{n}{size}\PY{p}{(}\PY{n}{as}\PY{p}{,} \PY{l+m+mi}{1}\PY{p}{)} \PY{o}{==} \PY{n}{length}\PY{p}{(}\PY{n}{weights}\PY{p}{)}

\PY{c}{\PYZsh{} the outcomes are represented as 0/1; these lines convert them to \PYZhy{}1/+1:}
\PY{n}{as} \PY{o}{=} \PY{p}{(}\PY{l+m+mi}{2} \PY{o}{*} \PY{n}{as} \PY{o}{.\PYZhy{}} \PY{l+m+mi}{1}\PY{p}{)}\PY{p}{;}
\PY{n}{bms} \PY{o}{=} \PY{p}{(}\PY{l+m+mi}{2} \PY{o}{*} \PY{n}{bms} \PY{o}{.\PYZhy{}} \PY{l+m+mi}{1}\PY{p}{)}\PY{p}{;}
\PY{n}{bps} \PY{o}{=} \PY{p}{(}\PY{l+m+mi}{2} \PY{o}{*} \PY{n}{bps} \PY{o}{.\PYZhy{}} \PY{l+m+mi}{1}\PY{p}{)}\PY{p}{;}
\end{Verbatim}
\end{tcolorbox}

    From this data, we can construct a matrix $\bm{q}$ with elements
\(\sum_\lambda w_\lambda \langle a_x^\lambda \rangle \langle b_{y, \langle a_{x}^{\lambda} \rangle}^{\lambda} \rangle\), to be later compared against $\bm{p}$.
First, let us make a helper function that takes a single deterministic
strategy (that is, a single term in the sum above) represented by
\texttt{weight}, \texttt{a}, \texttt{bm} and \texttt{bp}, and adds the
corresponding correlations matrix
\(w_\lambda \langle a_x^\lambda \rangle \langle b_{y, \langle a_x \rangle}^\lambda \rangle\)
to a given matrix \texttt{q}:

    \begin{tcolorbox}[breakable, size=fbox, boxrule=1pt, pad at break*=1mm,colback=cellbackground, colframe=cellborder]
\prompt{In}{incolor}{7}{\boxspacing}
\begin{Verbatim}[commandchars=\\\{\}]
\PY{k}{function} \PY{n}{add\PYZus{}weighted\PYZus{}outcome\PYZus{}matrix!}\PY{p}{(}\PY{n}{q}\PY{p}{,} \PY{n}{weight}\PY{p}{,} \PY{n}{a}\PY{p}{,} \PY{n}{bm}\PY{p}{,} \PY{n}{bp}\PY{p}{)}
    \PY{k}{for} \PY{n}{x} \PY{k}{in} \PY{n}{axes}\PY{p}{(}\PY{n}{q}\PY{p}{,} \PY{l+m+mi}{1}\PY{p}{)}
        \PY{k}{if} \PY{n}{a}\PY{p}{[}\PY{n}{x}\PY{p}{]} \PY{o}{==} \PY{l+m+mi}{1}
            \PY{k}{for} \PY{n}{y} \PY{k}{in} \PY{n}{axes}\PY{p}{(}\PY{n}{q}\PY{p}{,} \PY{l+m+mi}{2}\PY{p}{)}
                \PY{n}{q}\PY{p}{[}\PY{n}{x}\PY{p}{,} \PY{n}{y}\PY{p}{]} \PY{o}{+=} \PY{n}{weight} \PY{o}{*} \PY{n}{bp}\PY{p}{[}\PY{n}{y}\PY{p}{]} \PY{c}{\PYZsh{} When a\PYZus{}x = 1 then a\PYZus{}x * b\PYZus{}y = b\PYZus{}\PYZob{}y,1\PYZcb{}}
            \PY{k}{end}
        \PY{k}{elseif} \PY{n}{a}\PY{p}{[}\PY{n}{x}\PY{p}{]} \PY{o}{==} \PY{o}{\PYZhy{}}\PY{l+m+mi}{1}
            \PY{k}{for} \PY{n}{y} \PY{k}{in} \PY{n}{axes}\PY{p}{(}\PY{n}{q}\PY{p}{,} \PY{l+m+mi}{2}\PY{p}{)}
                \PY{n}{q}\PY{p}{[}\PY{n}{x}\PY{p}{,} \PY{n}{y}\PY{p}{]} \PY{o}{\PYZhy{}=} \PY{n}{weight} \PY{o}{*} \PY{n}{bm}\PY{p}{[}\PY{n}{y}\PY{p}{]} \PY{c}{\PYZsh{} When a\PYZus{}x = \PYZhy{}1 then a\PYZus{}x * b\PYZus{}y = \PYZhy{}b\PYZus{}\PYZob{}y,\PYZhy{}1\PYZcb{}}
            \PY{k}{end}
        \PY{k}{else}
            \PY{n+nd}{@error} \PY{l+s}{\PYZdq{}}\PY{l+s}{Invalid element in a matrix.}\PY{l+s}{\PYZdq{}}
        \PY{k}{end}
    \PY{k}{end}
    \PY{k}{return} \PY{n}{q}
\PY{k}{end}{;}
\end{Verbatim}
\end{tcolorbox}

    Starting from \(\boldsymbol{q} = 0\), we run the function above for each
deterministic strategy:

    \begin{tcolorbox}[breakable, size=fbox, boxrule=1pt, pad at break*=1mm,colback=cellbackground, colframe=cellborder]
\prompt{In}{incolor}{8}{\boxspacing}
\begin{Verbatim}[commandchars=\\\{\}]
\PY{n}{q} \PY{o}{=} \PY{n}{zeros}\PY{p}{(}\PY{n}{eltype}\PY{p}{(}\PY{n}{weights}\PY{p}{)}\PY{p}{,} \PY{n}{size}\PY{p}{(}\PY{n}{p}\PY{p}{)}\PY{p}{)}\PY{p}{;}
\PY{k}{for} \PY{n}{i} \PY{k}{in} \PY{l+m+mi}{1}\PY{o}{:}\PY{n}{length}\PY{p}{(}\PY{n}{weights}\PY{p}{)} \PY{c}{\PYZsh{} for each deterministic strategy add its matrix to q:}
    \PY{n+nd}{@views} \PY{n}{add\PYZus{}weighted\PYZus{}outcome\PYZus{}matrix!}\PY{p}{(}\PY{n}{q}\PY{p}{,} \PY{n}{weights}\PY{p}{[}\PY{n}{i}\PY{p}{]}\PY{p}{,} \PY{n}{as}\PY{p}{[}\PY{n}{i}\PY{p}{,} \PY{o}{:}\PY{p}{]}\PY{p}{,} \PY{n}{bms}\PY{p}{[}\PY{n}{i}\PY{p}{,} \PY{o}{:}\PY{p}{]}\PY{p}{,} \PY{n}{bps}\PY{p}{[}\PY{n}{i}\PY{p}{,} \PY{o}{:}\PY{p}{]}\PY{p}{)}
\PY{k}{end}\PY{p}{;}
\end{Verbatim}
\end{tcolorbox}

\hypertarget{exact-decomposition}{%
\subsubsection{Exact decomposition}\label{exact-decomposition}}

By construction, \(\boldsymbol{q}\) is an LHV+Out behaviour. How well
does it reproduce the quantum behaviour in \(\boldsymbol{p}\)? To
measure that, we define
\(\epsilon = \lVert \boldsymbol{p} - \boldsymbol{q} \rVert_2\):

    \begin{tcolorbox}[breakable, size=fbox, boxrule=1pt, pad at break*=1mm,colback=cellbackground, colframe=cellborder]
\prompt{In}{incolor}{9}{\boxspacing}
\begin{Verbatim}[commandchars=\\\{\}]
\PY{n}{epsilon} \PY{o}{=} \PY{n}{norm}\PY{p}{(}\PY{n}{p} \PY{o}{\PYZhy{}} \PY{n}{q}\PY{p}{)} \PY{c}{\PYZsh{} this is the Euclidean norm}
\PY{n}{print}\PY{p}{(}\PY{n}{epsilon}\PY{p}{)}
\end{Verbatim}
\end{tcolorbox}

    \begin{Verbatim}[commandchars=\\\{\}]
0.00019999656135527604
    \end{Verbatim}

    As we can see, \(\boldsymbol{q}\) closely reproduces \(\boldsymbol{p}\),
but not yet exactly. To get an exact model for a quantum behaviour, consider
instead the behaviour given by \[
    \nu \boldsymbol{p} = \nu \boldsymbol{q} + (1 - \nu) \boldsymbol{y} ,
\] where \(\nu\) is a scalar and \(\boldsymbol{y}\) is any LHV+Out
behaviour. If we find any \(\nu\) and \(\boldsymbol{y}\) satisfying this
equation, we can guarantee that \(\nu \boldsymbol{p}\) has an exact
LHV+Out representation.
Lemma 1 in Ref. \cite{DIB+23} shows that any correlations matrix
\(\boldsymbol{y}\) with \(\lVert \boldsymbol{y} \rVert_2 \leq 1\) is an
LHV behaviour. As
\(\mathrm{LHV}(n_x, n_y, n_a, n_b) \subset \text{LHV+Out}(n_x, n_y, n_a, n_b)\),
the same \(\boldsymbol{y}\) is also LHV+Out. This condition on
\(\boldsymbol{y}\) can be satisfied by choosing
\(\nu = 1 / (1 + \epsilon)\), which leads to an exact model for the
behaviour of the state \(W(\nu v)\), under the same set of measurements.

    \hypertarget{converting-into-a-model-for-the-state}{%
\subsubsection{Converting into a model for the
state}\label{converting-into-a-model-for-the-state}}

    Lastly, we must argue that this model is valid not only for the
behaviours of \(W(\nu v)\) under our particular choice of measurements,
but rather, for some state \(W(\tilde{\nu} v)\) under \emph{all}
possible projective measurements by Bob, and all projective measurements by Alice on a hemisphere of the Bloch ball. As explained in \cref{app:proof-result-3}, one
can consider the convex hull of the measurement vectors, compute the
radius \(\eta\) of the largest hemisphere that can be inscribed in it,
then observe that the state \(W(\eta^2 \nu v)\) admits the model for all
projective measurements on the hemisphere of Alice's and the entire sphere for Bob. For our particular
set of measurements, \(\eta\) is known to be \(\cos(\pi / 40)^2\)
\cite{hirsch2017betterlocalhidden}.
Altogether, this leads to the following visibility for which the Werner
state has an LHV+Out model for all measurements in the hemisphere:

    \begin{tcolorbox}[breakable, size=fbox, boxrule=1pt, pad at break*=1mm,colback=cellbackground, colframe=cellborder]
\prompt{In}{incolor}{10}{\boxspacing}
\begin{Verbatim}[commandchars=\\\{\}]
\PY{n}{nu} \PY{o}{=} \PY{l+m+mi}{1} \PY{o}{/} \PY{p}{(}\PY{l+m+mi}{1} \PY{o}{+} \PY{n}{epsilon}\PY{p}{)}
\PY{n}{eta\PYZus{}squared} \PY{o}{=} \PY{n}{cos}\PY{p}{(}\PY{n+nb}{pi}\PY{o}{/}\PY{l+m+mi}{40}\PY{p}{)}\PY{o}{\PYZca{}}\PY{l+m+mi}{4}
\PY{n}{final\PYZus{}visibility} \PY{o}{=} \PY{n}{nu} \PY{o}{*} \PY{n}{eta\PYZus{}squared} \PY{o}{*} \PY{n}{vis}
\end{Verbatim}
\end{tcolorbox}

    \begin{Verbatim}[commandchars=\\\{\}]
0.6982815667392431
    \end{Verbatim}

    Because there exists a Bell inequality which is violated by the Werner
state with visibility \(0.69604\) \cite{designolle2024better}, this LHV+Out state is
nonlocal, thus proving there exist nonlocal states which are LHV+Out for
every projective measurement on a hemisphere.

\end{document}